\date {}
\newcommand{\N}{\mathbb{N}}
\newcommand{\Z}{\mathbb{Z}}
\title{On entropies of block-gluing subshifts}
\author[1]{Svetlana Puzynina \thanks{The first author  has been partially supported by Russian Foundation of Basic Research (grant 20-01-00488).}}
\author[2]{Mathieu Sablik}
\affil[1]{Saint Petersburg State University, Russia and Sobolev Institute of Mathematics, Russia, s.puzynina@gmail.com}
\affil[2]{Universit\'e Toulouse Paul Sabatier, France, Mathieu.Sablik@math.univ-toulouse.fr}
\newtheorem{proposition}{Proposition}
\newtheorem{corollary}{Corollary}
\newtheorem{lemma}{Lemma}
\newtheorem{theorem}{Theorem}
\theoremstyle{definition}
\newtheorem{definition}{Definition}
\newtheorem{example}{Example}
\newtheorem{remark}{Remark}
\newtheorem{conjecture}{Conjecture}
\newtheorem{question}{Question}
\begin{document}

\maketitle

\begin{abstract}
A subshift $X$ is called $c$-block gluing if for any integer $n
\geq c$ and any two blocks $u$ and $v$ from the language of $X$ there exists an element of $X$ which has occurrences of $u$ and $v$ at distance $n$. In this note we study the topological entropies of $c$-block gluing binary one-dimensional subshifts. We define the set $R_c$ to be the set of entropies of all $c$-block-gluing subshifts, and $R=\cup_{c\in \mathbb{N}} R_c$. We show that the set $R$ is dense, while $R_1$ and $R_2$ are not; in particular, they have isolated points. We conjecture that the same holds for any $c$. \end{abstract}


\section{Introduction}

Topological entropy is the most important numerical invariant of a topological dynamical system~\cite{AKA}. For one dimensional subshift, topological entropy is given by the exponential growth rate of admissible words of the subshift of larger and larger size. For a class of dynamical systems, it is interesting to determine which numbers can be obtained as entropies. For example, the set of subshifts of finite type is countable, so not all non-negative real numbers can be realized as entropies of these systems. In the case of $\Z$-subshifts of finite type the set of their entropies is the set of logarithm of Perron numbers~\cite{LM95}, and in the case of  $\Z^2$-subshifts of finite type it is the set of upper-computable numbers~\cite{Hochman-Meyerovitch-2010}.

A general point of view can be to specify the algorithmic complexity of the set of numbers realized as entropies depending on the dynamical properties of the subshift of finite type (see~\cite{Herrera-Gangloff-Rojas-Sablik-2020} for more details). An interesting dynamical property for subshift of finite type is the notion of $c$-block-gluing. This is a mixing-type property introduced in~\cite{Boyle-Pavlov-Schraudner-2010} which means that there exists a constant $c$ such that for any two blocks in the language of the subshift, the pattern
obtained by gluing these two blocks at any distance greater or equal to $c$  between them is also in the language of the subshift. In fact, the entropies of $\Z^2$-subshifts of finite type which are block gluing are computable~\cite{PS15}; however, it is still not known if all computable numbers can be realised. In fact it is possible to extend the notion of block gluing by   adding a gap function giving the distance which allows to concatenate two patterns. This has been studied for $\Z^2$-subshifts of finite type in~\cite{GS20}, and in~\cite{GH19} the authors give a precise threshold in the case of one dimensional subshifts with decidable languages. Above this threshold all upper-computable numbers can be realized (it is the same that the class of subshifts with decidable languages without dynamical constraints). Under this threshold, and in particular for constant block gluing, only computable numbers can be obtained, but it is not known if all computable numbers can be realized. In fact, in the case of constant block gluing, the combinatorial constraints which appear in the study of  realizable entropies are not well understood.

In this paper we are interested in  understanding more precisely the possible entropies of block gluing $\Z$-subshift. We define a set $R_c$ to be the set of entropies of all $c$-block-gluing subshifts, and $R=\cup_{c\in \mathbb{N}} R_c$. We are interested in characterizing these sets. We show that the set $R$ is dense, while $R_1$ and $R_2$ are not; in particular, $1$- and $2$-block gluing subshifts with minimal entropies give isolated points in $R_1$ and $R_2$, respectively. We conjecture that the same holds for any $c$. However, we also show that the sets $R_c$ have many accumulation points.

\section{Basic notions}

Let $A$ be a finite set called an  \emph{alphabet}. A
\emph{configuration} is a sequence of elements of $A$ indexed by
$\mathbb{Z}$, we let $A^\mathbb{Z}$ denote the set of all
configurations. It is a compact space for the product topology. A \emph{subshift} is a closed subset of $A^\mathbb{Z}$ invariant for the shift map $\sigma:A^\mathbb{Z}\to A^\mathbb{Z}$ defined by $\sigma(x)_i=x_{i+1}$ for all $x\in A^\mathbb{Z}$ and $i\in\Z$.

The \emph{language} of a subshift $X\subset A^\mathbb{Z}$, denoted by $L_X$, is the set of words defined as follows: a finite word $v$ is in $L_X$ if there exists a configuration in $X$ where the word $v$ occurs. The set of words of size $n\in\mathbb{N}$ which occur in $X$ is denoted by $L_X(n)$.

Equivalently, a subshift can be defined by the set of forbidden blocks which cannot appear in a configuration of $X$. If this set is finite, $X$ is called a \emph{subshift of
finite type} (SFT for short). The \emph{order} of an SFT is the smallest integer $r$ such that the subshift can be defined by
forbidden words of length $r$. If an SFT $X$ is defined by a finite set $S$ of forbidden words, we denote it by $X=SFT(S)$.

 The subshift $X$ is \emph{$c$-block
gluing} if for all $u, v\in L_X$, for all $n\geq c$ there exists $w$ such that $|w|=n$ and $uwv\in L_X$.

\begin{example} Let $d=1$ and let $X= SFT(11)$. Clearly, $X$ is
1-block-gluing, since we can glue at any distance with all-$0$ word.\end{example}

The following lemma states that to prove that a subshift is $c$-block-gluing, it is enough to check that we can paste any two words from its language at distance $c$:

\begin{lemma} \label{lem:c} A subshift $X$ is $c$-block
gluing if and only if for all $u, v\in L_X$,  there exists $w$ such that $|w|=c$ and $uwv\in L_X$. \end{lemma}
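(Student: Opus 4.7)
The plan is to prove Lemma~\ref{lem:c} by a straightforward induction on the gap length $n \geq c$, once the trivial direction is dispatched.

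First I would note that the forward direction is immediate: if $X$ is $c$-block gluing, then in particular for $n = c$ the definition yields, for any $u, v \in L_X$, a word $w$ with $|w| = c$ and $uwv \in L_X$. So the content of the lemma is the reverse implication, namely that gap exactly $c$ suffices to obtain all gaps $\geq c$.

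For the reverse direction I would fix $u, v \in L_X$ and show by induction on $n \geq c$ that there exists $w$ of length $n$ with $uwv \in L_X$. The base case $n = c$ is exactly the hypothesis. For the inductive step, assume the statement holds for some $n \geq c$; I would extend $v$ on the left by one letter, using the fact that since $v \in L_X$ there is a configuration $x \in X$ in which $v$ occurs, and hence there is a letter $a \in A$ immediately preceding that occurrence of $v$, giving $av \in L_X$. Applying the inductive hypothesis to the pair $u$ and $av$ produces a word $w'$ of length $n$ with $u w' (av) \in L_X$. Setting $w := w' a$ we get $|w| = n+1$ and $u w v = u w' a v \in L_X$, completing the induction.

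The only step that requires any care is the observation that every word $v \in L_X$ admits an extension $av \in L_X$ by some letter $a$; but this is a standard and immediate consequence of the definition of $L_X$ as the set of words occurring in configurations of $X$, since every position in $\mathbb{Z}$ has a predecessor. I do not foresee a genuine obstacle here; the lemma is essentially a bookkeeping statement, and the induction is really the only mechanism at work.
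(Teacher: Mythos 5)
Your proof is correct and rests on essentially the same idea as the paper's: absorb the excess gap $n-c$ into an extension of one of the two words within the language, then invoke the distance-$c$ gluing hypothesis. The paper does this in a single step by extending $u$ to the right by a word of length $n-c$, whereas you extend $v$ to the left one letter at a time via induction on $n$ --- a cosmetic repackaging of the same mechanism, with your left-extension step justified correctly since configurations are bi-infinite.
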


\begin{proof} The ``only if'' direction follows from the definition of $c$-block-gluing. To prove the ``if'' direction, it is enough to notice that we can glue any two words $u$ and $v$ from the language at distance $n>c$. Since $u$ is in the language, it can be extended to the right by a word of length $n-c$, i.e, there exists $w$, $|w|=n-c$, such that $uw$ is in the language of the subshift. Since we can glue $uw$ and $v$ at distance $c$, this shows that we can glue $u$ and $v$ at distance $n$. \end{proof}

If a subshift is not of finite type, it can still be defined by a (infinite) set $S$ of prohibited words. So, given $S\subseteq \Sigma^*$, we let $X_S$ denote a subshift containing all configurations avoiding $S$.

The \emph{entropy} of a subshift $X$ is defined as:
$$h(X) = \inf \frac {\log(|L_X(n)|)}{n}.$$

We say that two subshifts $X$ and $Y$ are \emph{conjugate} if there exists a continuous function from $X$ to $Y$ which is bijective and which commutes with $\sigma$. It is well known that two conjugate subshifts have the same entropy (see~\cite{LM95} for
details).

\section{General properties of $c$-block-gluing subshifts}

\subsection{Existence of periodic points}

The following proposition states that in the one-dimensional case, each $c$-block gluing subshift contains periodic points.
\begin{proposition}
Each $c$-block gluing subshift $X\subseteq A^{\mathbb{Z}}$
contains a periodic point.
\end{proposition}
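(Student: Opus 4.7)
My plan is to build a periodic point by iterating Lemma~\ref{lem:c}, taking a compactness limit, and then extracting periodicity via a pigeonhole on the fillers.

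First, fix any $u \in L_X$ with $m = |u|$. Lemma~\ref{lem:c} produces $w_1 \in A^c$ with $u w_1 u \in L_X$; applying it again to the current word and $u$, one iteratively obtains fillers $w_1, \ldots, w_k \in A^c$ with $u w_1 u w_2 u \cdots u w_k u \in L_X$ for every $k \geq 1$. Extending symmetrically to the left and passing to a subsequential limit in the compact space $A^\Z$ (using that $X$ is closed), I obtain a configuration $x^\star \in X$ in which $u$ occurs at every position of the form $j(m+c)$, $j \in \Z$, with some filler $f_j \in A^c$ occupying the positions between consecutive copies of $u$.

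Next, the bi-infinite filler sequence $(f_j)_{j \in \Z}$ lives in the finite set $A^c$. A pigeonhole / Ramsey-type argument applied to long windows of consecutive fillers produces indices $i < j$ and a window length $N$, as large as desired, such that $(f_i, \ldots, f_{i+N-1}) = (f_j, \ldots, f_{j+N-1})$. The segment $S = u f_i u f_{i+1} u \cdots f_{j-1} u \in L_X$ is then used as a periodic tile: iterated gluing of $S$ via Lemma~\ref{lem:c} together with a final compactness step yields a configuration $y \in X$ satisfying $\sigma^{(j-i)(m+c)} y = y$, i.e.\ a periodic point of $X$.

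The main obstacle is verifying that the iterated tile actually lies in $X$: a single-point pigeonhole match of fillers is not enough to guarantee that repeating $S$ avoids forbidden patterns of $X$, since $X$ need not be of finite type, and forbidden patterns of large length may straddle the joins between copies of $S$. Resolving this requires either a refined Ramsey-style pigeonhole giving matched windows long enough to absorb any relevant forbidden patterns, or an indirect argument via the induced ``filler subshift'' $Y \subseteq (A^c)^\Z$ associated with the regular occurrences of $u$—this $Y$ is itself $1$-block-gluing as a direct consequence of Lemma~\ref{lem:c}, which makes the combinatorics cleaner—and showing that this reduced subshift has a periodic point translates back to a periodic point of $X$.
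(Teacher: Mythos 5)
Your construction breaks down at the final step, and the obstacle you yourself flag is fatal rather than technical. The pigeonhole match $(f_i,\dots,f_{i+N-1})=(f_j,\dots,f_{j+N-1})$ inside the single configuration $x^\star$ only certifies that \emph{boundedly many} repetitions of the candidate period occur in $L_X$; to conclude that the periodic configuration with period $uf_i u f_{i+1}\cdots f_{j-1}$ lies in $X$ you would need $(uf_i\cdots f_{j-1})^k\in L_X$ for \emph{every} $k$, which no window match of fixed length provides. Your fallback, ``iterated gluing of $S$ via Lemma~\ref{lem:c}'', does not repair this: the lemma produces \emph{some} filler of length $c$ at each join, with no control that the same filler recurs, so the compactness limit $y$ is merely an element of $X$ containing infinitely many copies of $S$ --- nothing forces $\sigma^{(j-i)(m+c)}y=y$. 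A sanity check that pigeonhole alone cannot suffice: the identical extraction could be run on a grid configuration $\cdots uf_{-1}uf_0uf_1\cdots$ whose filler sequence is Sturmian, inside the orbit closure of that configuration, yet such a subshift has no periodic point; so periodicity must be wrung out of the gluing property in the last step, which your argument fails to do. Your alternative route via the filler subshift $Y\subseteq (A^c)^{\Z}$ is essentially circular: $Y$ is indeed $1$-block-gluing (glue two grid words at distance $c$ in $X$, and extend any grid word to a bi-infinite grid configuration by repeatedly appending $u$ and passing to a limit), but ``showing that this reduced subshift has a periodic point'' is exactly the proposition being proved, now for $c=1$ over the alphabet $A^c$.

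The missing idea --- and the heart of the paper's proof --- is a monotonicity/stabilization argument on the \emph{sets} of self-gluing fillers, not on one sampled filler sequence. Start from a letter $a$, set $u_0=a$, let $W_k\subseteq A^c$ be the set of words $w$ with $u_kwu_k\in L_X$, and put $u_{k+1}=u_kw_ku_k$ for a chosen $w_k\in W_k$. Since $u_kwu_k$ is a factor of $u_{k+1}wu_{k+1}$, the sets $W_k$ form a decreasing chain of nonempty subsets of the finite set $A^c$, hence stabilize at some step $K$. After stabilization one fixed filler $w$ remains admissible at every doubling, so $u_{K+m}=(u_Kw)^{2^m-1}u_K$, and arbitrarily long powers of $u_Kw$ lie in $L_X$; closedness of $X$ then yields the periodic point $(u_Kw)^{\infty}\in X$. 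It is precisely this persistence of a single filler through all scales that your proposal lacks and that no Ramsey refinement of the window-matching step can substitute for.
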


\begin{proof}
We start with a letter $a$, and see how we can glue it with itself
at distance $c$: $a*^ca$. We have several possibilities, at most
$|A|^c$. If we have only one, we have a periodic point. If we have
several, choose any of them (a word $w_1$ of length $c$), and
continue the process with the word $aw_1a$ instead of $a$. Now the
choice is a subset of the previous set of possibilities. Again, if
$w_1$ is still possible, continue the process with it, or if not, choose
any other possible word $w_2$, and continue the process.
Either the set of choices is stabilised and sticking to the same
word we get a periodic point, or at some point there will be no
choice, and so we also get a periodic point.
\end{proof}

\subsection{Description of $c$-block gluing with Rauzy graph}

Let $G=(V,E,\lambda)$ be a graph with the set $V$ of vertices  and the set $E$ of edges labeled by
$\lambda:E\to A$ for all $e\in E$. We let $\mathbf{i}(e)$ denote the initial
vertex and $\mathbf{t}(e)$ the terminal vertex. We can  define the graph subshift
\begin{eqnarray*}
Y_G=\{x\in A^Z: \textrm{ there exists }(e_i)_{i\in\Z}\in E^\Z \textrm{ such that } \\ \mathbf{t}(e_i)=\mathbf{i}(e_{i+1})  \textrm{ and }  \lambda(e_i)=x_i \textrm{ for all } i\in\Z \}.\end{eqnarray*}
Let $X$ be a subshift, define $G_n(X)=(V_n(X),E_n(X))$ the Rauzy
graph of order $n$ such that the vertices are $V_n(X)=L_X(n-1)$
and for $u=u_1\dots u_{n-1}$ and $v=v_1\dots v_{n-1}$ in
$L_X(n-1)$ one has $(u,v)\in E_n(X)$ if and only if $u_1\dots
u_{n-1}v_{n-1}=u_1v_1\dots v_{n-1}\in L_X(n)$ and the edge $(u,v)$
is labeled by the letter $v_{n-1}\in A$. See~\cite{Fogg} for more details.

Let $X$ and $Y$ be two subshifts,  the \emph{Hausdorff
distance} is defined as follows: $$d_H(X,Y)=2^{-\min\{n:L_X(n)\ne L_Y(n)\}}.$$ The set of
subshifts endowed with distance $d_H$ is compact. Denote
$X_n\overset{H}{\longrightarrow}Y$ if
$d(X_n,Y)\underset{n\to\infty}{\longrightarrow}0$.

Clearly, $Y_{G_n(X)}$ is a subshift of finite type of order $n$ such that $L_X(n)=L_{Y_{G_n(X)}}(n)$; in other words, $d_H(X,Y_{G_n(X)})\leq 2^{-n}$,
$Y_{G_{n+1}(X)}\subset Y_{G_n(X)}$ and
$$X=\bigcap_{n\in\N}Y_{G_n(X)}.$$

 The properties of Rauzy graphs of $c$-block-gluing subshifts are summarized in the following propositions.

\begin{proposition}\label{prop:SFTcbg}
Let $X$ be an SFT of order $n$. Then $X$ is $c$-block gluing if and only if for every pair $(u,v)$ of vertices of $G_n(X)$ 
there is a path of length $n+c-1$ from $u$ to $v$.
\end{proposition}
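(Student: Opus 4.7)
I would prove both directions by translating between the block-gluing statement and path statements in $G_n(X)$, using two ingredients: Lemma~\ref{lem:c} (which reduces $c$-block gluing to the ability to glue at distance exactly $c$), and the fact that membership in $L_X$ for an order-$n$ SFT is decided entirely by length-$n$ factors.

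For the ``only if'' direction, suppose $X$ is $c$-block gluing and fix two vertices $u,v\in L_X(n-1)$ of $G_n(X)$. Lemma~\ref{lem:c} yields a word $w$ of length $c$ with $uwv\in L_X$. Since $|uwv|=2n+c-2$, reading its length-$n$ factors in order exhibits a sequence of $2n+c-2-n+1=n+c-1$ edges of $G_n(X)$ that starts at $u$ and ends at $v$, i.e., a path of the required length.

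For the ``if'' direction, assume the graph condition. By Lemma~\ref{lem:c} it suffices to produce, for arbitrary $u,v\in L_X$, a word $w$ of length $c$ with $uwv\in L_X$. The plan is to reduce to the case $|u|=|v|=n-1$. Replace $u$ by a word $\hat u\in L_X(n-1)$ whose last $\min(|u|,n-1)$ letters agree with the corresponding suffix of $u$: if $|u|\geq n-1$ take $\hat u$ to be the length-$(n-1)$ suffix of $u$, otherwise extend $u$ on the left inside some configuration of $X$ to get $\hat u$. Symmetrically choose $\hat v\in L_X(n-1)$. The hypothesis supplies a path of length $n+c-1$ from $\hat u$ to $\hat v$ in $G_n(X)$, whose label is a word $\hat u\, w\, \hat v$ of length $2n+c-2$ all of whose length-$n$ factors lie in $L_X(n)$; since $X$ is an order-$n$ SFT this forces $\hat u\, w\, \hat v\in L_X$.

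The last step is to conclude $uwv\in L_X$. Invoking the order-$n$ SFT property once more, I would check each length-$n$ factor of $uwv$: those lying entirely inside $u$ or entirely inside $v$ are in $L_X(n)$ because $u,v\in L_X$, while any factor straddling one or both of the junctions uses at most $n-1$ letters on the $u$-side and at most $n-1$ on the $v$-side, so it coincides with a length-$n$ factor of $\hat u\, w\, \hat v$ and therefore lies in $L_X(n)$. The main (mild) obstacle is bookkeeping this last verification uniformly across the four case combinations of $|u|$ versus $n-1$ and $|v|$ versus $n-1$; nothing deeper seems to be required.
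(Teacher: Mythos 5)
Your proof is correct and takes essentially the same route as the paper's: necessity by reading off a path of length $n+c-1$ from a gluing of two length-$(n-1)$ words at distance $c$, and sufficiency by using Lemma~\ref{lem:c}, replacing $u$ and $v$ by length-$(n-1)$ words $\hat u,\hat v$ (suffix/prefix, or extensions when the words are short), and extracting the length-$c$ gluing word from the hypothesized path. The only difference is that you spell out the final verification that every length-$n$ factor of $uwv$ is admissible, a bookkeeping step the paper's proof asserts without detail.
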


\begin{proof} The necessity of the condition is obvious by definition of $c$-block-gluing subshift: we must be able to glue any two words of length $n-1$ (they correspond to vertices of $G_n(X)$) at distance $c$ (this corresponds to a path of length $n+c-1$).

 To prove the sufficiency of the condition, by Lemma \ref{lem:c} it is enough to prove that for any two words $u$, $v$ 
from the language of $X$ we can glue them at distance $c$, i.e., there exists a word $w$ of length $c$ such that $uwv\in L_X$.
If $u$ (resp., $v$) is of length $n-1$, it corresponds to a vertex $u$ (resp., $v$) of $G_n(X)$.  If $u$ (resp., $v$) is shorter than $n-1$, take any vertex $u'$ (resp., $v'$) whose label has $u$ as its suffix (resp., $v$ as a prefix). If $u$ (resp., $v$) is longer than $n-1$, consider a path in $G_n(X)$ corresponding to $u$, i.e. a path of length $|u|-n+1$ with labels from the prefix of $u$ and its final vertex marked by a suffix of $u$ of length $n-1$ (resp., a path corresponding to $v$, i.e. a path of length $|v|-n+1$ with labels from the suffix of $v$ and its initial vertex marked by a prefix of $v$ of length $n-1$); we let $u'$ (resp., $v'$) denote the label of its final (resp., initial) vertex. By the condition of the lemma, there is a path of length $n+c-1$ connecting $u'$ and $v'$. Let $w$ be the word marking the path. It has $v'$ as its suffix, i.e., $w=w'v'$ with $|w'|=c$. In each of the cases above, we have the word $uw'v$ in the language of $X$.
\end{proof}

\begin{proposition}
A subshift $X$ is $c$-block gluing if and only if for all
$n\in\N$, for every pair $(u,v)$ of vertices of $G_n(X)$ and for every $k\geq c+n-1$, there is a path from $u$ to $v$ of size $k$.
\end{proposition}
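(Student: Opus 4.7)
The plan is to prove the two directions separately. The forward direction is a direct translation of the definition of $c$-block gluing into the language of Rauzy-graph paths, using the containment $L_X \subseteq L_{Y_{G_n(X)}}$. The backward direction is more delicate: a path in $G_n(X)$ only certifies a word in the \emph{larger} subshift $Y_{G_n(X)}$, which may contain patterns forbidden in $X$. The main obstacle is upgrading such a path to an actual gluing word in $L_X$, and for this I plan to use Lemma~\ref{lem:c} to reduce to distance exactly $c$ and then exploit the nesting $X=\bigcap_{n}Y_{G_n(X)}$ via a pigeonhole-and-compactness argument.

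For $(\Rightarrow)$, suppose $X$ is $c$-block gluing. Fix $n \in \N$, a pair $(u,v)$ of vertices of $G_n(X)$, and $k\geq c+n-1$. Since $u,v\in L_X$ and $k-n+1\geq c$, the gluing property yields $w$ with $|w|=k-n+1$ and $uwv\in L_X\subseteq L_{Y_{G_n(X)}}$. This word, of length $k+n-1$, traces a path of length $k$ in $G_n(X)$ from the vertex $u$ (its prefix of length $n-1$) to the vertex $v$ (its suffix of length $n-1$).

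For $(\Leftarrow)$, by Lemma~\ref{lem:c} it suffices to find, given any $u,v\in L_X$, a word $w$ of length $c$ with $uwv\in L_X$. For each $n$ with $n-1\geq\max(|u|,|v|)$, I would extend $u$ to the left and $v$ to the right inside $L_X$ to obtain $u_n',v_n'\in L_X(n-1)$ having $u$ as a suffix and $v$ as a prefix respectively (possible because $u$ and $v$ occur in configurations of $X$). Applying the hypothesis with $k=c+n-1$ produces a path of length $c+n-1$ from $u_n'$ to $v_n'$ in $G_n(X)$, hence a word $u_n'w_nv_n'\in L_{Y_{G_n(X)}}$ with $|w_n|=c$, and therefore $uw_nv\in L_{Y_{G_n(X)}}$.

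The remaining task is to pass from $L_{Y_{G_n(X)}}$ back to $L_X$. Since there are only $|A|^c$ possible words $w_n$, a pigeonhole argument singles out a fixed $w$ of length $c$ with $uwv\in L_{Y_{G_n(X)}}$ for infinitely many $n$. Because $Y_{G_{n+1}(X)}\subseteq Y_{G_n(X)}$, the languages $L_{Y_{G_n(X)}}$ are nested decreasing, so in fact $uwv\in L_{Y_{G_n(X)}}$ for \emph{every} $n$. A standard compactness argument---pick, for each $n$, a configuration in $Y_{G_n(X)}$ containing $uwv$ at a fixed position, extract a convergent subsequence in $A^\Z$, and use that each $Y_{G_m(X)}$ is closed---then produces a configuration in $\bigcap_n Y_{G_n(X)}=X$ containing $uwv$, so $uwv\in L_X$, as required.
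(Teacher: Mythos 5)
Your proof is correct and takes essentially the same approach as the paper: the forward direction translates gluings into Rauzy-graph paths, and the converse uses the nested approximations $Y_{G_{n+1}(X)}\subseteq Y_{G_n(X)}$ together with a compactness argument to land in $X=\bigcap_{n\in\N}Y_{G_n(X)}$. The only cosmetic difference is that you first reduce to distance exactly $c$ via Lemma~\ref{lem:c} and fix the gluing word $w$ by pigeonhole, whereas the paper handles all $k\geq c$ directly and lets compactness act on configurations with $u$ and $v$ pinned at fixed positions (no pigeonhole needed); both variants are valid.
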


\begin{proof}
If $X$ is $c$-block gluing then for all $u,v\in V_X(n)=L_X(n-1)$ there is a configuration $x\in X\subset Y_{G_n(X)}$ such that $x_{[0,|u|-1]}=u$ and $x_{[|u|+k,|u|+k+|v|-1]}=v$ for $k\geq c$ so there is a path from $u$ to $v$ of size $k+n-1$.

Reciprocally, given two words $u,v\in L(X)\subset L(Y_{G_n(X)})$ and $k\geq c$ then for every $n\in\N$ there exists $x^n\in Y_{G_n(X)}$ such that $x^n_{[0,|u|-1]}=u$ and $x^n_{[|u|+k,|u|+k+|v|-1]}=v$ since there is a path from $x^n_{[|u|-n-2,|u|-1]}=u$ and $x^n_{[|u|+k,|u|+k+n-1]}=v$ of size $k+n-1$. By compactness there exists $x\in\cap_nY_{G_n(X)}=X$  such that $x_{[0,|u|-1]}=u$ and $x_{[|u|+k,|u|+k+|v|-1]}=v$. Thus $X$ is $c$-block gluing.
\end{proof}

The previous two propositions imply the following:

\begin{proposition}\label{prop:Gn}
 If $X$ is a $c$-block gluing subshift then $Y_{G_n(X)}$ is a $c$-block gluing  subshift of finite type such that $d_H(X,Y_{G_n(X)})\leq 2^{-n}$ for all $n\in\N$.
\end{proposition}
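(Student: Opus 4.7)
The plan is to just concatenate what the previous two propositions already give us. First I would observe from the construction of $Y_{G_n(X)}$ that it is, by definition, a graph subshift whose admissible bi-infinite sequences are exactly the labelings of bi-infinite walks in $G_n(X)$; equivalently, the set of forbidden words is contained in $A^n \setminus L_X(n)$, so $Y_{G_n(X)}$ is an SFT of order at most $n$. The Hausdorff estimate $d_H(X,Y_{G_n(X)}) \le 2^{-n}$ then reduces to the identity $L_X(n) = L_{Y_{G_n(X)}}(n)$, which was already recorded in the preamble to the two propositions.

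The $c$-block-gluing part is where the previous propositions do the work. Since $X$ is $c$-block gluing, the second proposition applied to $X$ gives that for every pair $(u,v)$ of vertices of $G_n(X)$ there is a path from $u$ to $v$ of any length $k \ge c+n-1$; in particular there is one of length exactly $n+c-1$. Now $Y_{G_n(X)}$ is an SFT of order $n$ whose Rauzy graph of order $n$ is exactly $G_n(X)$ (because the language at level $n$ coincides with that of $X$, so the vertex set and the edge set of $G_n(Y_{G_n(X)})$ are identical to those of $G_n(X)$). Hence the hypothesis of Proposition \ref{prop:SFTcbg} is satisfied for $Y_{G_n(X)}$, and that proposition gives that $Y_{G_n(X)}$ is $c$-block gluing.

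There is really no obstacle here; the only thing one has to be a bit careful about is the identification $G_n(Y_{G_n(X)}) = G_n(X)$, i.e.\ that taking the Rauzy graph of the graph subshift at the same order recovers the original graph. This is immediate from $L_X(n) = L_{Y_{G_n(X)}}(n)$ together with the definition of the Rauzy graph, since both vertices (length $n-1$ words) and edges (length $n$ words) of $G_n$ are determined purely by $L(n-1)$ and $L(n)$. After this remark, the proof is a two-line application of the two previous propositions plus the order/Hausdorff observation.
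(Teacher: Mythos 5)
Your proof is correct and follows exactly the route the paper intends: the paper states this proposition with no written argument beyond ``the previous two propositions imply the following,'' and your derivation (paths of length $n+c-1$ in $G_n(X)$ from the second proposition, the identification $G_n(Y_{G_n(X)})=G_n(X)$ via $L_X(n)=L_{Y_{G_n(X)}}(n)$, then Proposition~\ref{prop:SFTcbg}, with the Hausdorff bound already recorded in the preamble) is precisely the intended combination. Your explicit care about why the Rauzy graph of the graph subshift at order $n$ recovers $G_n(X)$ is a detail the paper leaves implicit, but it is the same argument.
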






\begin{remark}In particular, due to Proposition \ref{prop:SFTcbg}, one can algorithmically decide if a given SFT is $c$-block-gluing.


Note that in general it is undecidable to find out if a subshift with a decidable language is $c$-block gluing. Indeed, given a Turing machine $\mathcal{M}$, one considers the subshift $X_\mathcal{M}\subset\{0,1\}^Z$ where the forbidden words are $10^n10^m$ if $\mathcal{M}$ halts in $n$ steps on the empty input and $m<n$. Clearly, the language of $X_\mathcal{M}$ is decidable. If $\mathcal{M}$ does not halt, then $10^n1$ can be followed by any words. If $\mathcal{M}$ halts in $t$ steps, then $10^t1$ and $1$ cannot be pasted at a
distance less than $t$. Thus $X_\mathcal{M}$ is $0$-block gluing iff $\mathcal{M}$ does not halt.

However, we remark that the property is co-semi-decidable since to find out that a subshift is undecidable it suffices to find two words $u$, $v$ and a distance $k>c$ such that none of the words $u A^k v$ is in the language of the subshift.
\end{remark}

\section{General properties of the spectrum} \label{section:general}

Let $c\in\N$, define \emph{the spectrum of order $c$}:
$$R_c=\left\{h(X):X\textrm{ $c$-block gluing subshift}\right\}.$$

Clearly, $R_c\subset R_{c+1}$; define  \emph{the spectrum}
$R=\bigcup_{c\in\N} R_c$ the set of possible entropies of constant
block gluing subshifts.

The main problematic is to understand the structure of $R_c$ and
$R$.


\begin{proposition}\label{prop:lowerbound} Let $X$ be a $c$-block-gluing subshift with positive entropy, then $h(X)\geq \frac{\log (|L_X(k)|)}{c+k}$ for each $k$. \end{proposition}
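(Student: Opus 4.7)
The plan is to exploit the $c$-block-gluing property to produce many admissible words of a prescribed length, which then force a lower bound on the entropy.

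First I would iterate Lemma~\ref{lem:c} to show that any $m$-tuple of words $u_1,\dots,u_m\in L_X(k)$ can be concatenated with gluing segments of length exactly $c$ between consecutive blocks: there exist $w_1,\dots,w_{m-1}\in A^c$ such that $u_1 w_1 u_2 w_2 \cdots w_{m-1} u_m \in L_X(mk+(m-1)c)$. The base case $m=2$ is Lemma~\ref{lem:c}; the inductive step glues the partial concatenation $u_1 w_1 \cdots u_m$ (which lies in $L_X$) with $u_{m+1}$ at distance $c$, again by Lemma~\ref{lem:c}. The positions of the $u_i$-blocks inside the resulting word are fixed (they occupy $[j(k+c),j(k+c)+k-1]$ for $j=0,\dots,m-1$), so distinct tuples $(u_1,\dots,u_m)$ produce distinct concatenations, giving
$$|L_X(mk+(m-1)c)| \geq |L_X(k)|^m.$$

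Next I would pass to the entropy. Since $|L_X(m+n)|\leq |L_X(m)|\cdot|L_X(n)|$ by restriction, the sequence $\log|L_X(n)|$ is subadditive, and Fekete's lemma upgrades the infimum defining $h(X)$ to the limit $\lim_{n\to\infty}\log|L_X(n)|/n$. Taking logs in the inequality above, dividing by $n_m=mk+(m-1)c$, and letting $m\to\infty$ yields
$$h(X) \;=\; \lim_{m\to\infty}\frac{\log|L_X(mk+(m-1)c)|}{mk+(m-1)c} \;\geq\; \lim_{m\to\infty}\frac{m\log|L_X(k)|}{mk+(m-1)c} \;=\; \frac{\log|L_X(k)|}{k+c},$$
which is the desired bound.

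I do not anticipate any serious obstacle. The one subtle point is that the entropy is defined as an infimum, which by itself only gives inequalities in the wrong direction for a lower bound; the remedy is Fekete's lemma, which converts that infimum into a genuine limit along any subsequence, in particular along $n_m = mk+(m-1)c$. The positive-entropy hypothesis is used only implicitly, to guarantee $|L_X(k)|\geq 2$ for some $k$; when $|L_X(k)|=1$ the bound $h(X)\geq 0$ is trivial.
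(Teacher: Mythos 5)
Your proposal is correct and follows essentially the same route as the paper: gluing words of length $k$ at distance exactly $c$ to obtain $|L_X(mk+(m-1)c)|\geq |L_X(k)|^m$ and then passing to the limit in the entropy. The only difference is that you spell out the details the paper leaves implicit (injectivity of the tuple-to-word map, and Fekete's lemma to identify the infimum defining $h(X)$ with the limit along your subsequence of lengths), which is a welcome tightening rather than a new idea.
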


In particular, taking $k=c$, we get $h(X)\geq \frac{\log
(c+1)}{2c}$. Indeed, $|L_X(k)|\geq c+1$ since $X$ is not reduced to
periodic points.

\begin{proof}
 For each length $k$, we can put any two words from
$L_X(k)$ at distance $c$, hence the inequality $|L_X(2k+c)|\geq
|L_X(k)|^2$. Similarly, we get $|L_X(l(k+c)|\geq |L_X(k)|^l$ for
each integer $l$. So

$$h(n)=\lim_{l\to\infty}\frac{\log |L_X(l(k+c))|}{l(k+c)} \geq  \frac{\log|L_X(k)|}{k+c}.$$
\end{proof}

 \begin{corollary}
 One has $R_c\subset\{0\}\cup\left[\frac{\log(2)}{c+1},+\infty\right[$.
  \end{corollary}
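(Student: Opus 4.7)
The plan is to deduce the corollary directly from Proposition \ref{prop:lowerbound} by splitting on the sign of the entropy. If $h(X)=0$ there is nothing to prove, so the work is confined to the case $h(X)>0$, where it suffices to locate the right $k$ to feed into Proposition~\ref{prop:lowerbound} and recover the claimed gap.

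The right choice is $k=1$. The key elementary observation is that $h(X)>0$ forces $|L_X(1)|\geq 2$: if only a single letter $a$ appeared in configurations of $X$, then $X$ would be the singleton $\{a^{\mathbb{Z}}\}$, so $|L_X(n)|=1$ for every $n$ and $h(X)=0$, contradicting positivity. (Equivalently, $X$ would not even admit two distinct configurations to glue in the first place.) Therefore, whenever $h(X)>0$ we have $|L_X(1)|\geq 2$.

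Applying Proposition \ref{prop:lowerbound} with $k=1$ then gives
\[
h(X)\;\geq\;\frac{\log |L_X(1)|}{c+1}\;\geq\;\frac{\log 2}{c+1},
\]
which combined with the case $h(X)=0$ yields exactly $R_c\subset\{0\}\cup\bigl[\tfrac{\log 2}{c+1},+\infty\bigr[$.

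There is essentially no hard step here: Proposition \ref{prop:lowerbound} does all the heavy lifting, and the only subtlety is justifying the lower bound $|L_X(1)|\geq 2$. I would present the proof as a single short paragraph, first reducing to the positive-entropy case, then noting the letter-count observation, and finally invoking Proposition \ref{prop:lowerbound} with $k=1$.
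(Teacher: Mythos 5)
Your proof is correct and follows essentially the same route as the paper: the paper's proof also applies Proposition~\ref{prop:lowerbound} with $k=1$ after noting $|L_X(1)|\geq 2$. You additionally spell out why positive entropy forces $|L_X(1)|\geq 2$, a small justification the paper leaves implicit.
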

\begin{proof}
 Let $X\subset A^{\Z^d}$ be a $c$-block-gluing subshift such that $\log|L_X(1)|\geq2$, then $h(X)\geq \frac{\log(2)}{c+1}$.

\end{proof}

\begin{proposition}
 $R$ is dense in $[0,+\infty)$.
\end{proposition}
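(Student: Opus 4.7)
The plan is to exhibit a dense family of entropies in $R$ by invoking the classical entropy theory of mixing shifts of finite type. Concretely, I will prove that every topologically mixing SFT is $c$-block gluing for some $c\in\N$; combined with Lind's characterisation of SFT entropies and the density of Perron numbers, this yields density of $R$ in $[0,+\infty)$.

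For the first step, let $X$ be a mixing SFT of order $n$. Topological mixing for an SFT is equivalent to the Rauzy graph $G_n(X)$ being strongly connected and aperiodic on the vertex set $L_X(n-1)$. By Perron--Frobenius theory applied to the (nonnegative integer) adjacency matrix of such a graph, there exists $N_0=N_0(X)$ such that for every $k\geq N_0$ and every ordered pair of vertices $(u,v)$ of $G_n(X)$ there is a path of length $k$ from $u$ to $v$. Setting $c:=\max(1,N_0-n+1)$ and applying Proposition~\ref{prop:SFTcbg} shows that $X$ is $c$-block gluing, so $h(X)\in R_c\subseteq R$.

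For the second step, I invoke Lind's theorem (see~\cite{LM95}): the set of entropies realised by topologically mixing $\Z$-SFTs equals $\{0\}\cup\{\log\lambda:\lambda\text{ a Perron number}\}$. Combined with Lind's companion result that the set of Perron numbers is dense in $[1,+\infty)$, and with the continuity of $\log$, this set is dense in $[0,+\infty)$; by the first step it is contained in $R$, which concludes the proof.

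The main obstacle is the appeal to Lind's theorems, which are non-trivial classical results; the translation from mixing to block gluing is essentially immediate from the graph-theoretic characterisation in Proposition~\ref{prop:SFTcbg}. A more self-contained alternative would be to build, for each target $\alpha\geq 0$ and each $\varepsilon>0$, an explicit SFT with a strongly connected and aperiodic Rauzy graph whose Perron eigenvalue is within $\varepsilon$ of $e^\alpha$ --- for example by perturbing the adjacency matrix of a large full shift --- but one must then simultaneously control both the entropy and the aperiodicity of the constructed graph to guarantee a uniform mixing time, which is precisely where working with abstract Perron numbers is more convenient.
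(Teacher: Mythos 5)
Your proof is correct and follows essentially the same route as the paper: invoke Lind's theorem that entropies of (mixing) SFTs are exactly logarithms of Perron numbers, note that Perron numbers are dense in $[1,+\infty)$, and observe that a topologically mixing SFT is $c$-block gluing for some $c$. The only difference is that you make explicit, via Perron--Frobenius theory and Proposition~\ref{prop:SFTcbg}, the mixing-implies-block-gluing step that the paper asserts without detail.
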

\begin{proof}
In Chapter 11 of \cite{LM95} it is shown that the possible entropies of SFT's are logarithms of Perron Numbers. In the proof, given a Perron Number, they construct a SFT which realizes the associated entropy; moreover, this SFT can be topoplogically mixing, so it is $c$-block-gluing for some $c$. As Perron numbers are dense in $[1,+\infty[$, we deduce the density of $R$.
\end{proof}

\begin{proposition}
 Let $(h_n)_{n\in\N}$ be a sequence of elements of $R_c$, then $\inf_nh_n\in R_c$.
\end{proposition}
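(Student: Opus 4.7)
The plan is to realise $h^{*}:=\inf_n h_n$ as the entropy of a Hausdorff limit of a subsequence of the witnesses; the key ingredient is that Proposition~\ref{prop:lowerbound} forces a uniform lower bound on the entropy of a $c$-block gluing subshift in terms of its language, which promotes the generally only upper semi-continuous entropy functional to a continuous one on the class of $c$-block gluing subshifts.

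First I would pick $c$-block gluing subshifts $X_n$ with $h(X_n)=h_n$ and pass to a subsequence (still indexed by $n$) so that $h_n\to h^{*}$. The sequence $(h_n)$ is now bounded, so Proposition~\ref{prop:lowerbound} applied with $k=1$ yields $|L_{X_n}(1)|\leq \exp\bigl((c+1)h_n\bigr)$ uniformly in $n$; hence, up to relabeling letters, I may assume that every $X_n$ is a subshift of $A^{\Z}$ for one common finite alphabet $A$. The space of subshifts on $A$ is compact for the Hausdorff distance $d_H$ (as recalled in the excerpt), so along a further subsequence, $X_{n_k}\to X$ in $d_H$ for some subshift $X\subseteq A^{\Z}$.

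Next I would verify that $X$ is $c$-block gluing. By Lemma~\ref{lem:c} it suffices to produce, for each $u,v\in L_X$, a gluing word of length exactly $c$. Pick any $u,v\in L_X$ and set $N=|u|+c+|v|$; for $k$ large enough, $d_H(X_{n_k},X)<2^{-N}$, so $L_{X_{n_k}}(N)=L_X(N)$, and in particular $u,v\in L_{X_{n_k}}$. Since $X_{n_k}$ is $c$-block gluing, there is $w_k\in A^c$ with $uw_k v\in L_{X_{n_k}}$. The set $A^c$ is finite, so some fixed $w\in A^c$ equals $w_k$ for infinitely many $k$, and for these $k$, $uwv\in L_{X_{n_k}}(N)=L_X(N)$.

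Finally I would show $h(X)=h^{*}$ through two inequalities. For $h^{*}\leq h(X)$: for each $m$ and all large $k$, $h(X_{n_k})\leq \log|L_{X_{n_k}}(m)|/m=\log|L_X(m)|/m$, so passing $k\to\infty$ gives $h^{*}\leq\log|L_X(m)|/m$, and taking the infimum over $m$ gives $h^{*}\leq h(X)$. For $h^{*}\geq h(X)$: Proposition~\ref{prop:lowerbound} applied to each $X_{n_k}$ yields $h(X_{n_k})\geq \log|L_{X_{n_k}}(m)|/(m+c)=\log|L_X(m)|/(m+c)$ for $k$ large, hence $h^{*}\geq \log|L_X(m)|/(m+c)$ for every $m$; letting $m\to\infty$ gives $h^{*}\geq h(X)$. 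The main obstacle is this second inequality, and Proposition~\ref{prop:lowerbound} is precisely what resolves it: without such a uniform lower bound one could only conclude $h^{*}\leq h(X)$, leaving open whether $h^{*}$ itself is attained in $R_c$.
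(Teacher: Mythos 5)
Your proof is correct, and while the compactness-plus-pigeonhole part coincides with the paper's (extract a $d_H$-convergent subsequence of witnesses; among the finitely many candidate gluing words of length $c$ one recurs, so the limit $X$ is $c$-block gluing), your identification of $h(X)$ takes a genuinely different route. The paper concludes via an interchange of infima, $h(X)=\inf_k\inf_n\frac{\log|L_{X_n}(k)|}{k}=\inf_n\inf_k\frac{\log|L_{X_n}(k)|}{k}=\inf_n h_n$, which tacitly uses $|L_X(k)|=\inf_n|L_{X_n}(k)|$ for every $k$; Hausdorff convergence only gives $L_{X_n}(k)=L_X(k)$ for $n$ large, and for an arbitrary convergent sequence the identity $h(X)=\inf_n h(X_n)$ can actually fail (take $X_1=SFT(11)$ and $X_n$ the full shift for $n\geq 2$: the limit is the full shift, whose entropy exceeds $\inf_n h_n$; the proposition survives there only because the infimum is attained at $X_1$). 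Your preliminary reduction to a subsequence with $h_n\to h^*$ repairs exactly this, and your split into two inequalities makes the mechanism transparent: $h^*\leq h(X)$ is free upper semicontinuity (entropy is an infimum of locally constant functions of the subshift in $d_H$), while $h^*\geq h(X)$ --- the direction the paper's interchange glosses over --- comes from the quantitative gluing bound of Proposition~\ref{prop:lowerbound}, since $\log|L_X(m)|/(m+c)\geq\frac{m}{m+c}h(X)$. So your argument isolates the fact that block gluing, not mere convergence of languages, is what forces lower semicontinuity of entropy along the sequence; your common-alphabet reduction (Proposition~\ref{prop:lowerbound} with $k=1$) is a further detail the paper skips entirely. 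Two harmless caveats worth one line each in a final write-up: if $\inf_n h_n$ is attained only at finitely many indices, no subsequence satisfies $h_n\to h^*$, but then $h^*$ equals some $h_m$ and lies in $R_c$ outright; and Proposition~\ref{prop:lowerbound} is stated for positive entropy, though its proof (the inequality $|L_X(k)|^l\leq|L_X(l(k+c))|$) uses no such hypothesis, and a zero-entropy $c$-block gluing subshift is a fixed point in any case.
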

\begin{proof}
 For all $n$, consider $X_n$ a $c$-block gluing subshift such that $h_n=h(X_n)$. By compactness, we can extract a converging subsequence $X_{n_k}$, so, slightly abusing notation, we assume $X_n\overset{H}{\longrightarrow}X$.

 Let $u,v\in L_X$. There exists $N$ such that $u,v\in L_{X_n}$ for all $n\geq N$. Since $X_n$ is $c$-block-gluing, for $c'\geq c$ there exists $w_n$ such that $|w_n|=c'$ and $uw_nv\in L_{X_n}$. Let $w$ be a word which appears infinitely often in the sequence $(w_n)_{n\in\N}$, we deduce that $uwv\in L_X$. Thus $X$ is also $c$-block-gluing.

Moreover, since for all $k\in\N$ there exists $N\in\N$ such that
$L_{X_n}(k)=L_X(k)$ for all $n\geq N$, one has
$$h(X)=\inf_{k\in\N}\frac{\log(|L_X(k)|)}{k}=\inf_{k\in\N}\inf_{n\in\N}\frac{\log(|L_{X_n}(k)|)}{k}=\inf_{n\in\N}\inf_{k\in\N}\frac{\log(|L_{X_n}(k)|)}{k}=\inf_{n\in\N}h(X_n).$$
Thus $h(X)=\inf_nh_n\in R_c$.
 \end{proof}

\begin{proposition}
Let $h\in R_c$ and $\epsilon>0$ such that $]h, h+\epsilon[\cap
R_c=\emptyset$. Then 
each $c$-block-gluing subshift  with entropy $h$ is a subshift of finite type.
\end{proposition}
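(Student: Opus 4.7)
\medskip

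\noindent\emph{Proof plan.} The plan is to approximate $X$ by its Rauzy SFTs $Y_{G_n(X)}$ and to exhibit their entropies as points of $R_c$ strictly above $h$ but arbitrarily close to $h$, which contradicts the assumed gap. By Proposition~\ref{prop:Gn}, each $Y_{G_n(X)}$ is a $c$-block gluing SFT containing $X$ with $d_H(X,Y_{G_n(X)})\leq 2^{-n}$, so $L_X(m)=L_{Y_{G_n(X)}}(m)$ for every $m\leq n$. In particular,
$$h \;\leq\; h(Y_{G_n(X)}) \;\leq\; \frac{\log|L_{Y_{G_n(X)}}(n)|}{n}\;=\;\frac{\log|L_X(n)|}{n},$$
and since $\log|L_X(n)|/n\to h$, this forces $h(Y_{G_n(X)})\to h$, with each term lying in $R_c$.

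Assume for contradiction that $X$ is not an SFT; then $X\subsetneq Y_{G_n(X)}$ for every $n$. The decisive step is to upgrade this strict inclusion to a strict entropy inequality $h<h(Y_{G_n(X)})$. Since $Y_{G_n(X)}$ is $c$-block gluing, it is topologically mixing, hence an irreducible SFT, and Parry's theorem furnishes a unique measure of maximal entropy, of full support. If $h(X)=h(Y_{G_n(X)})$ held, then upper semicontinuity of $\mu\mapsto h_\mu$ on the compact set of $\sigma$-invariant probability measures supported on $X$, together with the variational principle, would produce a measure on $X$ of entropy $h(Y_{G_n(X)})$; viewed as a measure on $Y_{G_n(X)}$, this would be an MME concentrated on $X\subsetneq Y_{G_n(X)}$, contradicting full-support uniqueness.

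Combining the two steps, $(h(Y_{G_n(X)}))_{n\in\N}$ is a sequence in $R_c$ with $h(Y_{G_n(X)})>h$ and $h(Y_{G_n(X)})\to h$, so $h(Y_{G_n(X)})\in{}]h,h+\epsilon[$ for all sufficiently large $n$; this contradicts $]h,h+\epsilon[\cap R_c=\emptyset$, and hence $X$ must be an SFT. The main obstacle is precisely the strict entropy monotonicity for proper subshifts of a mixing SFT; it is classical but nontrivial, relying on Parry's unique MME theorem (or equivalently, on the Lind--Marcus fact that adjoining any admissible forbidden word to an irreducible SFT strictly decreases its entropy).
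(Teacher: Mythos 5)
Your proof is correct and follows essentially the same route as the paper: approximate $X$ by the Rauzy-graph SFTs $Y_{G_n(X)}$, which by Proposition~\ref{prop:Gn} are $c$-block gluing with $h\leq h(Y_{G_n(X)})\leq \frac{\log|L_X(n)|}{n}\to h$, and invoke the gap hypothesis. The only difference is presentational: you argue by contradiction and make explicit, via Parry's unique full-support measure of maximal entropy, the strict entropy drop for proper subshifts of a mixing SFT, whereas the paper picks a single $n$ with $h(Y_{G_n(X)})\leq h+\frac{\epsilon}{2}$, concludes $h(Y_{G_n(X)})=h$ from the gap, and leaves that same classical strict-monotonicity fact implicit in its closing assertion that ``in fact $Y_{G_n(X)}=X$.''
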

\begin{proof}
 Let $X$ be a $c$-block-gluing subshift such that $h=h(X)$. There exists $n\in\N$ such that
 $$h(X)\geq\frac{\log(|L_X(n)|)}{n}-\frac{\epsilon}{2}=\frac{\log(|L_{Y_{G_n(X)}}|)}{n}-\frac{\epsilon}{2}\geq h(Y_{G_n(X)})-\frac{\epsilon}{2}.$$

As $Y_{G_n(X)}$ is $c$-block gluing by Proposition \ref{prop:Gn}  and $]h,h+\epsilon[\cap
R_c=\emptyset$, one deduces that $h(Y_{G_n(X)})=h$ and in fact
$Y_{G_n(X)}=X$.

\end{proof}

\section{Maximal subshift}


We will make use of the following notion:

\begin{definition} A $c$-block-gluing subshift $X$ is \emph{maximal} of order $n$ if for all $Y\subset X$ a $c$-block-gluing subshift such that $L_X(n)=L_Y(n)$, one has $X=Y$.
 \end{definition}

 We call it maximal in the sense that we prohibited maximal set of factors to keep the property of $c$-block-gluing.

 Clearly, if a subshift is maximal of order $n$, it is maximal of order $n'$ for each $n'>n$. The following examples show that the converse is not true.

\begin{example}\label{ex:SFTmax}$X=SFT(11,101)$ is maximal $2$-block gluing subshift of order $3$. To prove that, we must show that for any $2$-block gluing subshift such that $L_X(3)=L_Y(3)$ any word $v$ from $L_Y$  can be extended by both $0$ and $1$, unless $va$ has a forbidden factor as a suffix  (i.e, $va\notin L_X$). If $v=v'1$, then $v1$ is forbidden and $v$ must be continued by $0$, so $v0\in L_Y$. If $v=v'0$, then either $v=v''00$ or $v=v''10$. In the first case considering $v''**1$ and $v''0**1$, we get that both $v0$ and $v1$ belong to $L_Y$, and in the second case $v1$ is forbidden, so $v0\in L_Y$.


\end{example}

\begin{example}$Z=SFT(11,10101)$ is maximal $2$-block gluing of order $5$, but is not of order $2$. The proof of maximality order $5$ is similar to the previous example. Clearly, it is not maximal of order $2$ since for $2$-block gluing subshift $X=SFT(11,101)$ we have $L_X(2)=L_Z(2)$, $X\subset Z$, but $X\neq Z$.\end{example}

 \begin{proposition} \label{prop:maximal}
$SFT(1^k)$ is a maximal $1$-block-gluing subshift of order $k$.

$SFT(10^k1)$ is a maximal $1$-block-gluing subshift of order
$k+2$.  \end{proposition}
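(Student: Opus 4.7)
My plan, for both statements, is to show that every 1-block-gluing subshift $Y\subset X$ with $L_Y(n) = L_X(n)$ (where $n$ is the stated order, $k$ or $k+2$) must equal $X$. Verifying that $X$ itself is 1-block-gluing is straightforward in both cases. For maximality, since $L_Y\subseteq L_X$ is automatic, I will prove $L_X\subseteq L_Y$ by strong induction on $|w|$ for $w\in L_X$. The base case $|w|\le n$ is immediate from $L_Y(n) = L_X(n)$ and the factor-closure of subshift languages. The inductive step reduces, via the decomposition $w = u\cdot w_{|w|}$ with $u\in L_Y$ by induction, to two extension rules: a \emph{0-append rule}, $v\in L_Y \Rightarrow v0\in L_Y$, and a \emph{1-append rule}, $v\in L_Y$ with $v1\in L_X \Rightarrow v1\in L_Y$. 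Each rule is proved by applying 1-block-gluing (Lemma~\ref{lem:c}) to $v$ and an auxiliary word $u'\in L_Y$, chosen so that one of the two possible middle letters $a\in\{0,1\}$ completes a forbidden factor of $X$ and is thereby ruled out, while the other produces a word containing $v0$ (respectively $v1$) as a factor.

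For $X = SFT(1^k)$ the auxiliary word is $u' = 1^{k-1}\in L_Y(k-1)$. Gluing $v$ with $1^{k-1}$ forces $a=0$, since the choice $a=1$ would produce the forbidden $1^k$, yielding $v\,0\,1^{k-1}\in L_Y$ and hence $v0$ as a prefix. For the 1-append rule, I decompose $v = v'\,0\,1^j$ with $0\le j\le k-2$ (the degenerate cases $v=\epsilon$ and $v=1^j$ with $j\le k-2$ are absorbed into the base case, since then $|v1|\le k-1$), and apply the same gluing to $v'$ to obtain $v'\,0\,1^{k-1}\in L_Y$; the prefix of length $|v|+1$ of this word is exactly $v\,1 = v'\,0\,1^{j+1}$.

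For $X = SFT(10^k1)$ the auxiliary word is $u' = 0^k\,1\in L_Y(k+1)$. Gluing $v$ with $0^k\,1$ forces $a=0$ (the choice $a=1$ would insert the forbidden factor $1\,0^k\,1$ immediately after $v$), giving the 0-append rule. The 1-append rule splits according to the trailing block of $v$. Writing $v = v''\,1\,0^s$ with $s\ne k$ (the condition making $v\,1\in L_X$), when $0\le s\le k-1$ I glue $v$ with $0^{k-s-1}\,1$: the choice $a=0$ would complete the forbidden suffix $1\,0^k\,1$, so $a=1$ is forced and $v\,1$ appears as a prefix of the result. When $s\ge k+1$, I instead glue the shorter prefix $u = v''\,1\,0^{s-k-1}$ with $0^k\,1$: now $a=1$ creates $1\,0^k\,1$ between the inserted $1$ and the terminal $1$ of the auxiliary word, so $a=0$ is forced and the resulting word $u\,0\,0^k\,1$ is exactly $v\,1$. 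The remaining case $v = 0^m$ (with no $1$ in $v$) is handled by iterating a symmetric 0-prepend rule starting from $1\in L_Y(1)$. The main obstacle is this 1-append rule for $SFT(10^k1)$: the two sub-cases $s\le k-1$ and $s\ge k+1$ genuinely require different decompositions of $v$ (gluing all of $v$, or only a proper prefix of $v$) so that the auxiliary word $0^k\,1$ lines up with the trailing zeros of $v$ to produce the forbidden $1\,0^k\,1$ on the wrong middle letter.
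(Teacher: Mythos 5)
Your proposal is correct and follows essentially the same route as the paper: induction on word length, reducing to $0$-append and $1$-append rules proved by gluing against the auxiliary words $1^{k-1}$ (for $SFT(1^k)$) and $0^k1$, $0^{k-s-1}1$ (for $SFT(10^k1)$), with the glue letter forced because the alternative completes a forbidden factor. Your case split on the trailing block of $v$ matches the paper's exactly; the only cosmetic differences are that you absorb the degenerate all-$1$'s and short cases into the base case explicitly, and handle $v=0^m$ by a symmetric $0$-prepend rule from the letter $1$, where the paper instead glues the prefix $x'0^{i-k-1}$ directly against $0^k1$.
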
 

\begin{proof}
 Consider first $SFT(1^k)$. Let $Y\subset X$ be $1$-block-gluing such that $L_Y(k)=L_X(k)$. We are going to prove by induction that $L_Y(n)=L_X(n)$ for all $n\in\N$. Let $x\in L_Y(n)$ and $L_X(n)$,  we are going to prove that $x0$ and $x1$ are also $L_Y(n+1)$ if they are in $L_X(n+1)$:
 \begin{itemize}
\item By definition of $1$-block-gluing $x*1^{k-1}\in L_Y$ for some $*\in\{0,1\}$. Since $1^k\notin L_Y$, $*$ cannot be $1$ so $x0\in L_Y$.
\item To show that $x1\in L_Y$ if $x1\in L_X$ we need to consider two cases:
\begin{itemize}
\item If $x=x'0$, by $1$-block-gluing $x'*1^{k-1}\in L_Y$ for some $*\in\{0,1\}$ and $*$ cannot be $1$. Thus $x1\in L_Y$.
\item If $x=x'01^i$ with $i\geq 1$, then, since $1^k\notin L_Y$ and $x1\in L_X$, we have $i\leq k-2$. By  $1$-block-gluing $x'*1^{k-1}\in L_Y$ for some $*\in\{0,1\}$ and $*$ cannot be $1$. Thus $x1=x01^{i+1}\in L_Y$.
\end{itemize}

\end{itemize}

 Now consider $SFT(10^k1)$. Let $Y\subset X$ be $1$-block-gluing such that $L_Y(k+2)=L_X(k+2)$. We are going to prove by induction that $L_Y(n)=L_X(n)$ for all $n\in\N$. Let $x\in L_Y(n)$ and $L_X(n)$,  we are going to prove that $x0$ and $x1$ are also $L_Y(n+1)$ if they are in $L_X(n+1)$:
 \begin{itemize}
\item By definition of $1$-block-gluing $x*0^{k}1\in L_Y$ for some $*\in\{0,1\}$ and $*$ cannot be $1$, so $x0\in L_Y$.
\item To show that $x1\in L_Y$ if $x1\in L_X$, we need to consider different cases:
\begin{itemize}
\item If $x=x'10^i$ with $0\leq i\leq k-1$, by  $1$-block-gluing $x'10^i*0^{k-i-1}1\in L_Y$ for some $*\in\{0,1\}$ and $*$ cannot be $0$. Thus $x1\in L_Y$.
\item If $x=x'10^k$ then $x1\notin L_X$.
\item If $x=x'0^i$ with $ i\geq k+1$, by  $1$-block-gluing $x'0^{i-k-1}*0^{k}1\in L_Y$ for some $*\in\{0,1\}$ and $*$ cannot be $1$. Thus $x1=x'0^{i-k-1}00^{k}1\in L_Y$.
\end{itemize}\end{itemize}\end{proof}

\begin{remark} \label{remark:maximal} We remark that in fact to prove that an SFT $X$ of order at most $n$ is maximal of order $n$ one can show that for any $c$-block-gluing subshift $Y\subset X$ with $L_X(n)=L_Y(n)$ for each $v\in L(X)$ of length at least $n$ and a letter $a$ such that $va\in L_X$, one has to have $va\in L_Y$. A sufficient condition is given by the following. Since $X$ is $c$-block gluing,  for each $v,u\in L_X(n)$, each $a\in A$ such that $va \in L_X(n+1)$ and each prefix $v'$ of $v$ one can glue $v'*^cu$. If in addition for each $v\in L_X(n)$ and each $a\in A$ with $va\in L_X$ one has to have $va$ as prefix of  $v'*^cu$ for some $v'\in\mbox{pref}(v)$ and some $u\in L_X(n)$ (clearly, this can be verified with a final case study), then $X$ is maximal. Indeed, for $Y\subset X$ to be $c$-block gluing it is necessary to have $va\in L_Y$: for $|v|=n$ it is given by the condition, and for a longer word $v$ it is enough to to apply the condition for its suffix of length $n$.
\end{remark}

In the following proposition we prove the maximality of the subshifts which are candidates to have minimal entropy. We suggest that both 0 and 1 are in the language of the subshift to exclude trivial subshifts containing only 0's or only 1's.
\begin{proposition} If $Y$ is a $c$-block-gluing subshift containing letters $0$ and $1$, such that $11, 101, 1001, \dots, 10^{c-1}1\notin L_Y$, then
$$Y= SFT\{11, 101, 1001, \dots, 10^{c-1}1\}.$$
In other words, $Y$ is maximal of any order. \end{proposition}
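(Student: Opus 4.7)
The plan is to show $L_X \subset L_Y$, where $X := SFT\{11,101,\ldots,10^{c-1}1\}$; the reverse inclusion $L_Y \subset L_X$ is immediate from the hypothesis that $Y$ avoids these forbidden patterns, and I may assume $c \geq 1$ since the case $c=0$ has empty forbidden list. I would proceed by strong induction on word length, with a single preparatory claim doing all of the real work.

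The key claim is: for every $u \in L_Y$, one has $u0^c1 \in L_Y$. To prove it, apply Lemma \ref{lem:c} to $u$ and $1$ to produce a word $w$ of length $c$ with $uw1 \in L_Y \subset L_X$. In any word of $L_X$, two occurrences of $1$ must be separated by at least $c$ zeros; the trailing $1$ of $uw1$ sits at position $c+1$ of the length-$(c+1)$ suffix $w_1\cdots w_c 1$, so any further $1$ among $w_1,\ldots,w_c$ would be within distance $c$ of it and produce a forbidden factor $10^j1$ with $j<c$. Hence $w = 0^c$, proving the claim. An immediate corollary (using $c \geq 1$) is that $u \in L_Y$ implies $u0 \in L_Y$, since $u0$ is a prefix of $u 0^c 1$.

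For the induction, assume $L_Y(m) = L_X(m)$ for all $m \leq N$ and take $y = xa \in L_X(N+1)$, so $x \in L_Y$ by hypothesis. If $a = 0$, the corollary gives $y = x0 \in L_Y$. If $a = 1$, the fact that $x1 \in L_X$ forces $x$ either to contain no $1$ (so $x = 0^N$) or to end in at least $c$ zeros (so $x = x'0^c$ for some $x' \in L_X$ of length $N-c$): otherwise $x$ would have a $1$ within distance $c$ of the appended $1$, creating a forbidden factor. In the second case, $x'$ is a prefix of $x$, so $x' \in L_Y$ by the induction hypothesis, and the key claim applied to $u = x'$ gives $y = x'0^c1 \in L_Y$. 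In the first case, the key claim applied to $u = 0^{\max(0,\,N-c)} \in L_Y$ yields $0^{\max(c,\,N)} 1 \in L_Y$, of which $y = 0^N 1$ is a suffix.

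The only non-routine step is the key claim, where the inclusion $L_Y \subset L_X$ is used to completely pin down the glue word $w$. After that, the induction is just a case analysis on the appended letter, together with the observation that $x1 \in L_X$ forces the appropriate trailing-zeros structure on $x$; I do not anticipate any further obstacle.
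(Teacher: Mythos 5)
Your proof is correct, and its engine is the same as the paper's: glue a word with the single letter $1$ at distance $c$ and observe that, because $L_Y\subset L_X$, the $c$ symbols immediately preceding the appended $1$ are forced to be $0^c$. The paper uses exactly this forcing inside an induction on the length of $u$ organized by the tail of zeros ($u=u'0^i$, with the cases $i=0$, $0<i<c$, $i\geq c$), but it only starts that induction at length $c+2$ and treats the lengths $2,\dots,c+2$ separately by a complexity argument (minimal factor counts together with the Morse--Hedlund theorem). By isolating the forcing as your explicit key claim ($u\in L_Y$ implies $u0^c1\in L_Y$, hence also $u0\in L_Y$), your induction runs from length $1$, where the base case is just the hypothesis $0,1\in L_Y$, and covers all lengths uniformly; this makes the paper's separate short-length/Morse--Hedlund step unnecessary, which is a genuine simplification. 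Two cosmetic repairs: in the subcase $x=0^N$ with $N\leq c$, your $u=0^{\max(0,\,N-c)}$ is the empty word, and Lemma \ref{lem:c} is stated for words of the language, so you should instead take $u=0$ and note that $y=0^N1$ is a suffix of $00^c1$; and in the key claim the clean argument is to take the \emph{last} occurrence of $1$ in $w$, say at position $p$, so that $w_{p+1}\cdots w_c=0^{c-p}$ and the factor $10^{c-p}1$ with $c-p\leq c-1$ is forbidden --- ``within distance $c$'' alone does not by itself name a forbidden factor. Neither point affects correctness.
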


\begin{proof} We will prove that each $u$ in the language of $SFT\{11, 101, 1001, \dots, 10^c1\}$ must also be in $L_Y$. We let $k$ denote the length of $u$.

For length $k=2, \dots, c+2$ the proof follows from the complexity
argument: For $k=2, \dots, c+1$ we only have $k+1$ factors (which
is minimal possible, otherwise we have only periodic words by Morse and Hedlund theorem, which is impossible for block-gluing subshifts). For $k=c+2$ we have
$k+2$ factors, which is also minimal possible. Indeed, for the length $c+2$ we must have a factor $0^{c+2}$ (gluing $00*^{c}1$), all $c+2$ factors with exactly one occurrence of $1$ (consider $0*^{c}10^c$ or symmetric factors), and  $10^c1$ (gluing $1*^c1$).

For length $k\geq c+2$ we prove by induction by the length $k$ of $u$ that each $u\in SFT\{11, 101, 1001, \dots, 10^c1\}$  must also be in $L_Y$.
Consider the tail of $0$'s 
in $u$: $u=u'0^i$, so that $u'$ either ends with $1$ or is empty. If $i=0$,
then $u=u''0^c1$ and we get $u$ by applying the definition of
$c$-block-gluing to $u''*^c1$. If $i\geq c$, then we get $u$
considering the factor $u'0^{i-c}*^c1$. If $i< c$, then we get
$u$ considering the factor $u'*^c1$. The proposition is proved.
\end{proof}

\begin{example} Other examples of maximal subshifts:
SFT$(000,010)$ for $c=1$ of order $3$, SFT$(11, 10001, 100001)$ for $c=2$ of order $6$. The proofs are similar to the propositions above.
\end{example}

\section{Particular cases}\label{section:particular}

In this section, we provide some results about minimal entropies, isolation points and accumulation points for for small values of $c$.
In all this section $A=\{0,1\}$.

\subsection{Minimal entropy for $c=1$}

\begin{theorem}\label{thm:min_ent}Minimal positive entropy of 1-block-gluing subshift  corresponds to $SFT\{11\}$, and there is no 1-block-gluing subshift with entropy between $h(SFT\{11\})=\log\frac{1+\sqrt5}{2}$ and $\frac{\log(7)}{4}$. In other words,

$$R_1\subset\left\{0,\log\frac{1+\sqrt5}{2}\right\}\cup\left[\frac{\log(7)}{4},+\infty\right[$$

\end{theorem}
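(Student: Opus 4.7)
My plan is to combine the classification of $1$-block-gluing subshifts omitting $11$ (or, symmetrically, omitting $00$) given by the preceding proposition with the lower bound $h(X)\geq \frac{\log|L_X(3)|}{4}$ obtained from Proposition~\ref{prop:lowerbound} at $c=1$, $k=3$. Hence it suffices to show that a $1$-block-gluing subshift $X$ of positive entropy either falls into one of those two rigid cases, forcing $h(X)=\log\frac{1+\sqrt 5}{2}$, or has $|L_X(3)|\geq 7$, forcing $h(X)\geq \frac{\log(7)}{4}$.

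First I would peel off the rigid cases. Any $1$-block-gluing subshift of positive entropy must contain both letters $0$ and $1$, so if $11\notin L_X$ then the preceding proposition gives $X=SFT\{11\}$, and by the $0\leftrightarrow 1$ symmetry if $00\notin L_X$ then $X=SFT\{00\}$; both have entropy $\log\frac{1+\sqrt 5}{2}$. So I may assume $00,11\in L_X$ and concentrate on forcing $|L_X(3)|\geq 7$.

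Applying $1$-block-gluing at distance $1$ to short words already populates much of $L_X(3)$: gluing $00$ with $1$ yields $0001$ or $0011$, each containing $001$, so $001\in L_X$; the three symmetric gluings force $\{001,011,100,110\}\subset L_X(3)$. Moreover, gluing $0$ with $0$ yields $000\in L_X$ or $010\in L_X$, and gluing $1$ with $1$ yields $101\in L_X$ or $111\in L_X$, so $|L_X(3)|\geq 6$ with at least one representative from each of the pairs $\{000,010\}$ and $\{101,111\}$.

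The main task is the finite case analysis ruling out $|L_X(3)|=6$. There are four sub-cases, indexed by which element of each pair is missing from $L_X(3)$, and in each I plan to exhibit a single failed gluing whose two candidate outcomes each already contain one of the two missing length-$3$ words: gluing $101$ with $101$ when $010,111\notin L_X$ ($1010101$ contains $010$ and $1011101$ contains $111$); gluing $01$ with $01$ when $010,101\notin L_X$ (candidates $01001$ and $01101$); gluing $00$ with $11$ when $000,111\notin L_X$ (candidates $00011$ and $00111$); and gluing $010$ with $010$ when $000,101\notin L_X$ (candidates $0100010$ and $0101010$). Each contradicts $1$-block-gluing, so $|L_X(3)|\geq 7$ and Proposition~\ref{prop:lowerbound} yields $h(X)\geq\frac{\log(7)}{4}$. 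The delicate point, which makes the gap close at exactly the right place, is that two natural candidates $SFT\{000,111\}$ and $SFT\{010,101\}$ both have entropy exactly $\log\frac{1+\sqrt 5}{2}$ and each realise $|L_X(3)|=6$; the failed gluings above are precisely what keeps them out of $R_1$.
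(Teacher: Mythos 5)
Your proof is correct and takes essentially the same route as the paper: the rigid case ($11\notin L_X$ or, symmetrically, $00\notin L_X$) is dispatched via the maximality proposition, and otherwise $|L_X(3)|\geq 7$ is forced by exhibiting gluings both of whose completions contain a missing length-$3$ word, after which Proposition~\ref{prop:lowerbound} with $k=3$, $c=1$ gives $h(X)\geq\frac{\log 7}{4}$. The only difference is organizational: the paper's Lemma~\ref{lem:MinLY} runs the case analysis by which single word of $\{000,111,010,101\}$ is absent (via the gluings $00*11$, $10*01$, $01*01$), whereas you split by the pairs $\{000,010\}$ and $\{101,111\}$ and rule out one absence per pair with four gluings --- the same finite set of obstructions checked in a slightly different order.
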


We need several lemmas for the proof.

First we remark that due to Proposition \ref{prop:maximal}, if $Y$
is a 1-block-gluing subshift and $11\notin L_Y$, then $Y=
SFT\{11\}$.




\begin{lemma}\label{lem:MinLY} If $00, 11\in L_Y$, then $|L_Y(3)|\geq 7$. \end{lemma}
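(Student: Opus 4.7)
The plan is a direct case analysis on which of the eight binary words of length $3$ can be absent from $L_Y$, exploiting 1-block-gluing (via Lemma~\ref{lem:c}) to force specific factors into $L_Y$.

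First I would show that the four words $001, 011, 100, 110$ are unconditionally in $L_Y$: since $00, 11 \in L_Y$, 1-block-gluing supplies some $a\in\{0,1\}$ with $00a11\in L_Y$, and this contains $001$ and $011$ as length-$3$ factors regardless of the value of $a$; symmetrically, gluing $11$ with $00$ forces $100, 110\in L_Y$. The remaining candidates are the four words in $\{000, 010, 101, 111\}$, and the crux is to show that at most one of these is absent from $L_Y$, which would give $|L_Y(3)|\geq 4+3=7$.

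To this end I would rule out, one by one, each of the six unordered pairs of words in $\{000, 010, 101, 111\}$ as the set of simultaneously missing words. For each such pair I would exhibit a single 1-block-gluing obligation $u\,w\,v\in L_Y$ (with $u,v\in L_Y$, $|w|=1$) whose only two one-letter realisations each contain a word of that pair. Concretely: the pair $\{000,010\}$ is killed by gluing $0$ with $0$ (the only choices being $000, 010$); $\{101,111\}$ by gluing $1$ with $1$; $\{000,111\}$ by gluing $00$ with $11$ (choices $00011$ and $00111$); $\{000,101\}$ by gluing $10$ with $01$ (choices $10001, 10101$); $\{010,111\}$ by gluing $01$ with $10$ (choices $01010, 01110$); and $\{010,101\}$ by gluing $10$ with $10$ (choices $10010, 10110$). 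In every line, both one-letter extensions contain one of the supposedly missing length-$3$ words, contradicting 1-block-gluing.

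There is no real obstacle here: once the correct pairing of each forbidden duo with the gluing problem that refutes it is identified, each refutation is a one-line check, and the bookkeeping $4+3=7$ finishes the proof.
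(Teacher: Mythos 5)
Your proof is correct and takes essentially the same route as the paper's: both first force $001, 011, 100, 110 \in L_Y$ by gluing $00$ with $11$ (and symmetrically), then use one-letter gluing obligations of the form $u\,{*}\,v$ whose two fillings each contain a target word (including the paper's own $10\,{*}\,01$) to conclude that at most one of $000, 010, 101, 111$ can be missing. Your exhaustive refutation of all six pairs is just a more systematic arrangement of the paper's sequential case analysis, and every one of your six gluing checks is valid.
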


\begin{proof}
Consider a factor $00*11$, which must be in $Y$ for $*=0$ or $*=1$
by the definition of 1-block-gluing. So, $001, 011 \in Y$, and
either $000\in Y$ or $111\in Y$. Symmetrically, $110, 100 \in Y$.

Now if $000\notin Y$, then $111\in Y$, and consider $10*01\in Y$.
In this case $*=1$, and we get $010, 101\in Y$, which gives 7
factors. Symmetrically if $111\notin Y$.

If $010\notin Y$, then by what we proved above $000, 111\in Y$. Considering the word $01*01$, we ge that $*=1$ and hence $101$ is a factor. This also gives 7 factors. Symmetrically for the case $101\notin
Y$.
\end{proof}

\begin{lemma} $h(SFT\{11\})=\log\frac{1+\sqrt5}{2}$. \end{lemma}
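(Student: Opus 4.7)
The plan is to compute $|L_X(n)|$ explicitly for $X=SFT\{11\}$ (the classical golden mean / Fibonacci shift) and then take the growth rate. Let $a_n=|L_X(n)|$ denote the number of binary words of length $n$ containing no occurrence of $11$. I will derive the recurrence $a_n=a_{n-1}+a_{n-2}$ for $n\geq 3$ by the standard last-letter split: a word ending in $0$ corresponds to an arbitrary admissible word of length $n-1$, while a word ending in $1$ must end in $01$, giving an arbitrary admissible word of length $n-2$ followed by $01$. Checking the initial values $a_1=2$ and $a_2=3$ (the three words $00,01,10$) identifies $(a_n)$ as a shifted Fibonacci sequence.

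Next I would solve the linear recurrence using its characteristic polynomial $x^2-x-1=0$, whose roots are $\varphi=\frac{1+\sqrt{5}}{2}$ and $\bar\varphi=\frac{1-\sqrt{5}}{2}$. Writing $a_n=\alpha\varphi^n+\beta\bar\varphi^n$ with $\alpha>0$ determined by the initial conditions, and noting $|\bar\varphi|<1$, I obtain $a_n\sim\alpha\varphi^n$ as $n\to\infty$.

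Finally I would conclude by the definition of entropy:
\[
h(X)=\lim_{n\to\infty}\frac{\log a_n}{n}=\lim_{n\to\infty}\frac{\log(\alpha\varphi^n(1+o(1)))}{n}=\log\varphi=\log\frac{1+\sqrt{5}}{2}.
\]
(The limit exists and equals the infimum by subadditivity of $\log|L_X(n+m)|\leq \log|L_X(n)|+\log|L_X(m)|$, which holds here since any concatenation of two admissible words is admissible after inserting a $0$; even more directly, one may simply use the infimum definition given in the paper together with the asymptotic.)

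There is no real obstacle — this is a routine transfer-matrix / Fibonacci computation; the only thing to be slightly careful about is the accounting of initial conditions and the verification that the subdominant root does not spoil the limit, but both are immediate.
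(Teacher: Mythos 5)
Your proof is correct, but it takes a genuinely different route from the paper. The paper disposes of the lemma in one line by citing the general fact (Lind--Marcus, Ch.~4) that the entropy of an SFT equals the logarithm of the largest eigenvalue of the adjacency matrix of its Rauzy graph; for $SFT\{11\}$ the order-$2$ Rauzy graph has adjacency matrix $\bigl(\begin{smallmatrix}1&1\\1&0\end{smallmatrix}\bigr)$, whose Perron eigenvalue is $\frac{1+\sqrt5}{2}$. You instead compute $|L_X(n)|$ directly: the last-letter split giving $a_n=a_{n-1}+a_{n-2}$ with $a_1=2$, $a_2=3$ identifies $a_n=F_{n+2}$, and Binet's formula yields the growth rate. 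Your approach is self-contained and avoids Perron--Frobenius machinery, which is pedagogically nicer for this single example; the paper's citation is shorter and, more importantly, is the tool actually reused later (the entropies of $SFT(1111)$, $SFT(1001)$, $SFT(101)$ in the proof of the second-entropy theorem are all obtained the same eigenvalue way), so the authors get uniformity across all their SFT entropy computations. One small slip worth flagging: your stated justification for submultiplicativity, ``any concatenation of two admissible words is admissible after inserting a $0$,'' argues the wrong direction --- that gluing fact gives the \emph{superadditive} bound $a_na_m\leq a_{n+m+1}$. The inequality $a_{n+m}\leq a_na_m$ you actually need holds for any subshift, trivially, because a word of length $n+m$ determines its admissible prefix of length $n$ and suffix of length $m$. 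Since the fact itself is true, and your fallback (the asymptotic $a_n\sim\alpha\varphi^n$ together with $a_n=F_{n+2}\geq\varphi^n$, which forces $\inf_n\frac{\log a_n}{n}=\log\varphi$ under the paper's infimum definition of entropy) also closes the argument, this is a cosmetic misattribution rather than a gap.
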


\begin{proof} The entropy of an SFT is known to be equal to log of the largest eigenvalue of its Rauzy graph of the order of the subshift \cite{LM95}.
\end{proof}

Now we can prove Theorem \ref{thm:min_ent}.

\begin{proof}
If $11$ or $00$ is not in the langage of $Y$ then using
Lemma~\ref{lem:MinLY} with the lower bound from
Proposition~\ref{prop:lowerbound} with $k=3$ and $c=1$, we obtain
 $$h(Y)\geq\frac{\log(|L_Y(3)|)}{4}\geq\frac{\log(7)}{4}>h(SFT\{11\}).$$
 \end{proof}

\subsection{Accumulation points}

In this subsection we provide a series of examples of $c$-block gluing subshifts for different values of $c$; this includes several constructions for infinitely many accumulaton points for their entropies.

\begin{example} For $c\geq 2$, $i,j>2$ we have that SFT$(0^i,1^j)$
is $c$-block-gluing (the proof is direct: we show we can connect
any two factors without producing forbidden factors). As a
corollary, we get infinitely many points of accumulation of
entropy for a given $c$: each of SFT$(0^i)$ is a point of
accumulation, and the corresponding converging sequence is
SFT$(0^i,1^j)$ for $j\to\infty$.
\end{example}





\begin{example} Similarly, for $c=3$  and any increasing sequence of integers $0<l_1<l_2< \ldots $ and a set
$S=\{10^{2l_1}1,10^{2l_2}1, \dots \}$, the subshift $X_S$ of words avoiding words from $S$ as factors  is $c$-block-gluing (the proof is straightforward). The set of indices may be finite (in this case we have an SFT) or infinite.  Clearly, this subshift gives a
point of accumulation both for a finite and infinite set $S$. 
\end{example}



\begin{example} For $c=1$, take $S=\{10101, (101)^{k}\}$ for $k\in
\mathbb{N}$. Then $SFT(S)$ is 1-block-gluing: clearly, we can glue $u$ and $v$ with $0$'s unless $u$ ends with $1$ and $v$ starts with $1$, and in the latter case we can glue with $1$'s. This construction
gives an accumulation point SFT$(10101)$.
\end{example}


\begin{example} Here we give an example of 1-block gluing subshift
which is not SFT (in fact, a family of examples and in fact not
even sofic). For $c=1$, take $S=101(01| 101)^*$, and take any
subset $S'\subseteq S$. It is not hard to see that $X_{S'}$ is
1-block gluing. Indeed, if we need to glue together $u$ and $v$
such that $u$ ends with 0 and/or $v$ begins with 0, then we can
glue with $0^+$ (we do not produce prohibited words since 00 is
not a factor of $S$ and no factor in $S$ starts or begins with
$0$). If we need to glue together $u$ and $v$ such that $u$ ends
with 1 and $v$ begins with 1, then we can glue with $1^+$ (we do
not produce prohibited words since $111$ is not a factor of $S$
and no factor in $S$ starts or begins with  $11$).

Now notice that we can choose $S'$ so that $X_{S'}$ is not SFT and
even not sofic. For example, take $S'=\{10101(101)^{k_i}01|k_i\in
K\}$ for $K$ the set of prime numbers. Suppose $L_{X_{S'}}$ is
regular. Then $L_{X_{S'}} \cap 10101(101)^{*}01=S'$ is regular.
But $S'$ is not regular: due to pumping lemma, the lengths of
words of a regular language must contain an infinite arithmetic
progression, but they do not for $S'$. A contradiction.
\end{example}

\subsection{Second entropy for $c=1$.}

The next theorem gives the second entropy for $c=1$ and shows that it is also an isolated point.

\begin{theorem}\label{thm:second}
For $c=1$, the second positive entropy is given by $SFT(101)$ and there is
no 1-block-gluing subshift with entropy between $h(SFT\{101\})$
and $\frac{\log(31)}{6}$.
\end{theorem}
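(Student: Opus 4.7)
The strategy mirrors Theorem \ref{thm:min_ent}. First, $SFT\{101\}$ is $1$-block-gluing by Proposition \ref{prop:maximal} (the case $k=1$), and its entropy equals $\log\lambda$ where $\lambda$ is the Perron root of the transfer matrix of its Rauzy graph of order $3$. Direct computation gives the characteristic polynomial $\lambda(\lambda^3-2\lambda^2+\lambda-1)$, whose largest real root $\lambda\approx 1.7549$ satisfies $\lambda^6<31$, so $h(SFT\{101\})<\log(31)/6$.

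Assume for contradiction that $Y$ is a $1$-block-gluing subshift with $h(SFT\{101\})<h(Y)<\log(31)/6$. Proposition \ref{prop:lowerbound} applied with $c=1$ and $k=5$ yields $|L_Y(5)|\leq 30$. I then split on which short factors lie in $L_Y$: if $11\notin L_Y$ or $00\notin L_Y$, Proposition \ref{prop:maximal} forces $Y\in\{SFT\{11\},SFT\{00\}\}$, both of entropy $\log\phi<h(SFT\{101\})$, contradicting $h(Y)>h(SFT\{101\})$. Hence $00,11\in L_Y$. If $101\notin L_Y$, Lemma \ref{lem:MinLY} gives $|L_Y(3)|\geq 7$, so together with $101\notin L_Y(3)$ we get $L_Y(3)=L_{SFT\{101\}}(3)$; maximality of $SFT\{101\}$ (Proposition \ref{prop:maximal}) then forces $Y=SFT\{101\}$, with $h(Y)=h(SFT\{101\})$, which is not in the open interval. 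A symmetric argument rules out $010\notin L_Y$.

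The remaining case is $00,11,101,010\in L_Y$, equivalently $L_Y(3)=\{0,1\}^3$. Here the goal is to contradict $h(Y)<\log(31)/6$ by producing some $k$ with $|L_Y(k)|\geq 31^{(k+1)/6}$. The bound $k=5$ alone will not suffice, since for example $SFT\{00000,11111\}$ satisfies $L_Y(3)=\{0,1\}^3$ together with $|L_Y(5)|=30$ and yet has entropy roughly $0.656>\log(31)/6$ (the excess appears only at $k\geq 6$). The plan is therefore to classify, up to the $0\leftrightarrow 1$ and left--right symmetries, the possible configurations of the at most two forbidden length-$5$ words compatible with $L_Y(3)=\{0,1\}^3$ and $1$-block-gluing, and in each case apply the gluing condition to pairs $(u,v)\in L_Y^2$ with $|u|+|v|+1=k$ for some $k\geq 6$ in order to enumerate length-$k$ factors forced into $L_Y$, so as to recover the desired inequality $|L_Y(k)|\geq 31^{(k+1)/6}$ from Proposition \ref{prop:lowerbound}.

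The main obstacle will be this final combinatorial case analysis. Although finite and substantially reduced by symmetry, the bookkeeping of which length-$k$ words are forced into $L_Y$ by $1$-block-gluing for each candidate pair of forbidden length-$5$ words is delicate, and the value of $k$ achieving the bound may depend on the configuration. The technical heart of the proof lies in making this enumeration complete.
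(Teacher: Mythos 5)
Your opening reductions agree with the paper's: computing $h(SFT(101))$, ruling out $11\notin L_Y$ or $00\notin L_Y$ via Proposition \ref{prop:maximal}, and ruling out $101\notin L_Y$ or $010\notin L_Y$ via Lemma \ref{lem:MinLY} plus the maximality of $SFT(101)$ of order $3$. But the core of your argument is missing, and the pivot on which you base the escalation to $k\geq 6$ is wrong. Your test case $SFT\{00000,11111\}$ is not $1$-block-gluing: a word ending in $0^4$ cannot be glued at distance $1$ to a word beginning with $1^4$, since either choice of the middle letter creates $0^5$ or $1^5$. This is exactly the paper's observation that a $1$-block-gluing subshift cannot simultaneously forbid a length-$5$ word bordered by $0$'s and one bordered by $1$'s (glue $u_1u_2u_3u_4\,{*}\,v_2v_3v_4v_5$). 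Consequently your claim that ``the bound $k=5$ alone will not suffice'' is false, and the open-ended enumeration at $k\geq 6$ --- which you leave entirely unexecuted, calling it the technical heart --- is both unnecessary and, as written, not a proof. The paper closes the argument at $k=5$: from $h(Z)<\log(31)/6$ and Proposition \ref{prop:lowerbound} one gets $|L_Z(5)|\leq 30$, hence \emph{at least} two forbidden words of length $5$ (your ``at most two'' has the inequality backwards); Proposition \ref{prop:prohibited_factors} restricts the candidates to $1^5, 11011, 10101, 10001$ and their $0\leftrightarrow 1$ images; the bordered-words gluing argument forces all forbidden words to one side; maximality of $SFT(1^5)$ and $SFT(10^31)$ of order $5$ (Proposition \ref{prop:maximal}) excludes $1^5$ and $10001$, since forbidding either would force $Z$ to equal that SFT and hence have only one forbidden length-$5$ word; and gluing $101{*}1011$ shows $10101$ and $11011$ cannot both be forbidden. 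At most one forbidden word against at least two required --- contradiction, with no counting beyond length $5$.

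There are two further genuine gaps. First, you never address forbidden words of length $4$: a word such as $1001\notin L_Y$ is compatible with $L_Y(3)=\{0,1\}^3$ and with $|L_Y(5)|\leq 30$, yet falls outside your classification of ``forbidden length-$5$ words.'' The paper handles this layer separately: by Proposition \ref{prop:prohibited_factors} the only candidates are $1^4, 0^4, 1001, 0110$, each of which yields a maximal SFT with entropy exceeding $\log(31)/6$ (the computations $h(SFT(1111)), h(SFT(1001))>\log(31)/6$). Second, you never formulate, let alone prove, anything playing the role of Proposition \ref{prop:prohibited_factors}, which is the paper's key structural tool: without the constraint that a minimal prohibited word begins with $1^i0$ and ends with $01^i$ up to renaming (with $10^j1$ borders when $i=1$), your promised ``classification up to symmetry'' has no a priori restriction on which words can be minimal prohibited words, so the finite case analysis you defer cannot even be set up in the form you describe.
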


With Theorem \ref{thm:min_ent}, we get that

\begin{corollary}
$$R_1\subset\left\{0,\log\frac{1+\sqrt5}{2}, h(SFT(101))\right\} \cup\left[\frac{\log(31)}{6},+\infty\right[$$ \end{corollary}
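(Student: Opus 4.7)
The proof follows the pattern of Theorem \ref{thm:min_ent}. By Proposition \ref{prop:lowerbound} with $k=5$ and $c=1$, the hypothesis $h(Y) < \log(31)/6$ forces $|L_Y(5)| \leq 30$. First, I would verify that $SFT(101)$ is $1$-block-gluing using Proposition \ref{prop:SFTcbg}: in the Rauzy graph of order $3$ on $\{00,01,10,11\}$ with edges coming from $L_{SFT(101)}(3) = \{0,1\}^3 \setminus \{101\}$, a direct check produces a length-$3$ directed path between each pair of vertices. Its entropy is then $\log \lambda$, where $\lambda$ is the largest root of the characteristic polynomial $\lambda^3 - 2\lambda^2 + \lambda - 1 = 0$ of the adjacency matrix.

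Now suppose $Y$ is $1$-block-gluing with $h(Y) \in (h(SFT(101)), \log(31)/6)$ and case-split on which short words are absent from $L_Y$. If $11 \notin L_Y$ (or $00 \notin L_Y$), the argument of Theorem \ref{thm:min_ent} identifies $Y$ with $SFT(11)$ (or $SFT(00)$), whose entropy $\log\frac{1+\sqrt{5}}{2}$ is strictly less than $h(SFT(101))$, a contradiction. Otherwise $00, 11 \in L_Y$, and Lemma \ref{lem:MinLY} (together with its proof) gives $|L_Y(3)| \geq 7$ with at most one of $\{000, 010, 101, 111\}$ missing. If $101$ (resp.\ $010$) is missing, Proposition \ref{prop:maximal} applied to $SFT(10^k1)$ with $k=1$ forces $Y = SFT(101)$ (resp.\ $SFT(010)$), contradicting $h(Y) > h(SFT(101))$. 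If $000$ (resp.\ $111$) is missing, Proposition \ref{prop:maximal} applied to $SFT(1^3)$ forces $Y = SFT(000)$ (resp.\ $SFT(111)$), whose entropy is $\log t$ for the tribonacci constant $t \approx 1.839$; since $t^6 > 31$, this contradicts $h(Y) < \log(31)/6$.

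The remaining case is $|L_Y(3)| = 8$, where the aim is to show $h(Y) \geq \log(31)/6$. The key observation is: if $w = w_1 w_2 w_3 w_4 w_5 \notin L_Y$, then applying $1$-block-gluing to each of the three decompositions $(w_1 \cdots w_{j-1}) * (w_{j+1} \cdots w_5)$ with $j \in \{2, 3, 4\}$ forces the word obtained from $w$ by flipping the $j$th letter to lie in $L_Y$. Thus the set of forbidden length-$5$ words is an independent set in the Hamming graph on $\{0,1\}^5$ whose edges flip a single interior position. The main obstacle is a delicate case analysis of such independent sets (up to the $0 \leftrightarrow 1$ and reversal symmetries), combined with longer-range gluing constraints: for each surviving configuration, one must show either $|L_Y(5)| \geq 31$ (whence $h(Y) \geq \log(31)/6$ by Proposition \ref{prop:lowerbound}) or $|L_Y(6)| \geq 56$ (whence, since $56^6 > 31^7$, $h(Y) \geq \log(56)/7 > \log(31)/6$ by the same proposition), or that $Y$ coincides with a specific small SFT such as $SFT(0000)$, whose entropy can be computed as the logarithm of a tetranacci-type constant ($\approx 1.928$) and checked directly to exceed $\log(31)/6$. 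Enumerating all borderline subshifts in which several length-$5$ words are legitimately absent is the combinatorial heart of the argument.
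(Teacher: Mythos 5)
Your outer layers are sound and, as it happens, match the paper's route: in the paper the corollary is an immediate combination of Theorem \ref{thm:min_ent} with Theorem \ref{thm:second}, and your reconstruction of the length-2 and length-3 stages (maximality of $SFT(11)$ and $SFT(101)$ via Proposition \ref{prop:maximal}, Lemma \ref{lem:MinLY} with at most one of $000,010,101,111$ missing, the tribonacci estimate $t^6>31$) is correct, as is the characteristic polynomial $\lambda^3-2\lambda^2+\lambda-1$ for $h(SFT(101))$. But there is a genuine gap where the real work lies. First, you skip length 4 entirely: in the case $|L_Y(3)|=8$ you pass directly to length-5 forbidden words, yet $Y$ could forbid a length-4 word, and your interior-flip lemma for a forbidden $w=w_1\cdots w_5$ tacitly assumes the glued pieces $w_1\cdots w_{j-1}$ and $w_{j+1}\cdots w_5$ lie in $L_Y$, i.e., that $w$ is a \emph{minimal} prohibited word with all shorter factors present. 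The paper disposes of length 4 explicitly: by Proposition \ref{prop:prohibited_factors} the only candidate minimal prohibited words of length 4 are $0^4,1^4,1001,0110$; each yields a maximal SFT, and $h(SFT(1111)),h(SFT(1001))>\frac{\log(31)}{6}$, so all length-4 words are present. You mention $SFT(0000)$ only in passing and never treat $1001$ or $0110$.

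Second, and more seriously, your length-5 analysis is left unexecuted. Your flip-independence observation is correct but far weaker than Proposition \ref{prop:prohibited_factors}, which pins the candidate minimal prohibited words of length 5 down to exactly eight: $1^5,11011,10101,10001,0^5,00100,01010,01110$. With that list the paper finishes in a few lines: Proposition \ref{prop:lowerbound} with $k=5$, $c=1$ forces at least two prohibited words of length 5 (as you correctly note, $|L_Y(5)|\leq 30$), while gluing $u_1u_2u_3u_4*v_2v_3v_4v_5$ shows one cannot simultaneously prohibit a word beginning and ending with $0$ and one beginning and ending with $1$; then $1^5$ and $10001$ are excluded by maximality, and $10101$ and $11011$ cannot both be prohibited (consider $101*1011$). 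Hence at most one prohibition -- a contradiction -- with no enumeration of independent sets in the Hamming graph and no counting of $|L_Y(6)|$ needed. Your proposal instead declares this enumeration ``the combinatorial heart'' and does not carry it out, so as written it is an outline rather than a proof; the missing ingredient is precisely the structural restriction of Proposition \ref{prop:prohibited_factors} (equality of first and last letters, together with the prefix/suffix constraints $1^i0$, $01^i$ and $10^j1$), which collapses the case analysis you were unable to finish.
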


For a subshift $X$ we say that $v$ is a {\emph{minimal prohibited
word}} if $v\notin L_X$ and for each proper factor $u$ of $v$ we
have $u\in L_X$.

\begin{proposition}\label{prop:prohibited_factors}
Let $X$ be a 1-block-gluing subshift over $\{0,1\}$ and let $v$ be
a minimal prohibited word. Then there exists an integer $i\geq 1$
such that, up to renaming letters, $1^i0$ is a prefix of $v$ and
$01^i$ is a suffix of $v$. If in addition $i=1$, then there exists
an integer $j\geq 1$ such that, up to renaming letters, $10^j1$ is
a prefix and a suffix of $v$.
\end{proposition}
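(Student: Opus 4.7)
The plan is to exploit $1$-block gluing by showing that, for carefully chosen proper factors $\alpha,\beta\in L_X$ of $v$, both choices $a\in\{0,1\}$ force $v$ to appear as a factor of $\alpha a\beta$, contradicting the existence of some $a$ with $\alpha a\beta\in L_X$. Throughout I assume that $v$ contains both letters (the monochromatic case is degenerate and is not relevant to the main applications).

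First I glue $\alpha=v_1\cdots v_{n-1}$ with $\beta=v_2\cdots v_n$: the leftmost length-$n$ factor of $\alpha a\beta$ equals $v$ iff $a=v_n$, and the rightmost length-$n$ factor equals $v$ iff $a=v_1$, so $v_1\ne v_n$ yields a contradiction. Renaming so that $v_1=v_n=1$, let $i_1,i_2\geq 1$ be the lengths of the initial and terminal maximal $1$-runs of $v$. To show $i_1=i_2=:i$, I would suppose $i_1>i_2$ and glue $\alpha=v_1\cdots v_{n-i_2-1}$ with $\beta=v_2\cdots v_n$. The leftmost length-$n$ factor equals $v$ iff $a=v_{n-i_2}=0$, since the trailing letters $v_{n-i_2+1},\ldots,v_n$ are all $1$'s and $i_1>i_2$ forces $v_2,\ldots,v_{i_2+1}$ to be $1$'s as well; the rightmost length-$n$ factor is $a v_2\cdots v_n$, equal to $v$ iff $a=v_1=1$. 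Reversing $v$ gives the symmetric direction.

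For the addendum, assume $i=1$ and let $j_1,j_2\geq 1$ denote the lengths of the maximal $0$-runs immediately after $v_1$ and before $v_n$. Supposing $j_1>j_2$, I would glue $\alpha=v_1\cdots v_{n-1}$ with $\beta=v_{j_2+3}\cdots v_n$. The leftmost length-$n$ factor equals $v$ iff $a=v_n=1$; the rightmost length-$n$ factor starts at position $n-j_2-1$ of the glued word and equals $v$ iff $a=v_{j_2+2}=0$, the latter using $j_1>j_2$ to force $v_{j_2+2}=0$ together with the fact that the block $10^{j_2}$ appears both as $v_1\cdots v_{j_2+1}$ (initial) and as $v_{n-j_2-1}\cdots v_{n-1}$ (terminal boundary). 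Both values of $a$ fail, and symmetry gives $j_1=j_2$. The delicate step is picking the correct starting index $m=j_2+3$ for $\beta$: it is the unique choice that simultaneously makes the two length-$n$ factors of $\alpha a\beta$ coincide with $v$ under opposite values of $a$, and identifying it is the combinatorial heart of the proof.
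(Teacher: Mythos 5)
Your proof is correct and follows essentially the same route as the paper's: the same one-letter gluing $\alpha a \beta$ built from overlapping proper factors of $v$, applied in the same three stages ($v_1=v_n$, equal boundary $1$-runs, equal $0$-runs when $i=1$), and your gluing words coincide with, or are mirror images of, the words $v_1\cdots v_{n-i-1}*v_2\cdots v_n$ and $v_1\cdots v_{n-j-2}*v_2\cdots v_n$ used in the paper's three lemmas. Your direct contradiction from $i_1>i_2$ (resp.\ $j_1>j_2$) just compresses the paper's stepwise equalities $v_{i+1}=v_{n-i}$ and $v_{j+2}=v_{n-j-1}$, and your explicit assumption that $v$ contains both letters makes precise what the paper leaves implicit (for monochromatic $v$ the conclusion is degenerate, as the paper's own use of the proposition with $1^4,0^4$ shows).
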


The proof of the proposition is split into several lemmas.

\begin{lemma}
Under the conditions of Proposition \ref{prop:prohibited_factors},
we have $v_1=v_n$, where $|v|=n$.
\end{lemma}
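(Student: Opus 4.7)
The plan is to proceed by contradiction. Suppose $v_1 \neq v_n$; since the alphabet is $A=\{0,1\}$, this forces $\{v_1,v_n\}=A$. The main idea is to apply Lemma~\ref{lem:c} (distance-$1$ gluing suffices for $1$-block gluing) to two carefully chosen proper factors of $v$, and then observe that \emph{any} gluing letter makes $v$ reappear as a factor.

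First I would note that, since $v$ is a minimal prohibited word, both proper factors
$$p := v_1 v_2 \ldots v_{n-1} \qquad \text{and} \qquad s := v_2 v_3 \ldots v_n$$
belong to $L_X$. By Lemma~\ref{lem:c} applied to the $1$-block-gluing subshift $X$, there exists a letter $a\in A$ such that
$$p\,a\,s \;=\; v_1 v_2 \ldots v_{n-1}\,a\,v_2 v_3 \ldots v_n \;\in\; L_X.$$

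Next I would scan this word of length $2n-1$ for occurrences of $v$, focusing on two length-$n$ windows. The window at positions $1,\ldots,n$ reads $v_1 \ldots v_{n-1}\,a$, which equals $v$ if and only if $a = v_n$. The window at positions $n,\ldots,2n-1$ reads $a\,v_2 \ldots v_n$, which equals $v$ if and only if $a = v_1$. Since by assumption $\{v_1, v_n\} = A$, the letter $a$ must fall into one of these two cases, and in either case $v$ appears as a factor of $pas\in L_X$, so $v \in L_X$, contradicting the fact that $v$ is forbidden.

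There is no real obstacle here: the argument is a direct combinatorial observation once one picks the correct two factors to glue. The only delicate point worth flagging is that binaryness of $A$ is used essentially in the final dichotomy (so that $\{v_1, v_n\}$ with $v_1\neq v_n$ exhausts $A$); this is why the statement is phrased for the binary alphabet and why a generalisation would require a different idea.
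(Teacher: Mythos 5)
Your proof is correct and follows essentially the same route as the paper's: glue the proper prefix $v_1\cdots v_{n-1}$ and proper suffix $v_2\cdots v_n$ (both in $L_X$ by minimality of $v$) at distance $1$, and observe that either value of the gluing letter recreates $v$ as a prefix or a suffix, which under $v_1\neq v_n$ exhausts the binary alphabet. You are in fact slightly more explicit than the paper, which leaves the final binary dichotomy implicit; also note that the gluing at distance exactly $1$ follows directly from the definition of $1$-block gluing, so the appeal to Lemma~\ref{lem:c} is not needed.
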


\begin{proof}
Indeed, consider a word $w=v_1 \cdots v_{n-1} * v_2 \cdots v_{n}$,
which must be in $L_X$ since $X$ is $1$-block-gluing. If we have
$*=v_1$, then we have an occurrence of $v$ as a suffix, and if we
have $*=v_n$, then we have an occurrence of $v$ as a prefix.
\end{proof}

\begin{lemma}
Under the conditions of Proposition \ref{prop:prohibited_factors},
if $v$ has a prefix and a suffix $1^i$ for $i\geq 1$, then
$v_{i+1}=v_{n-i}$, where $|v|=n$.
\end{lemma}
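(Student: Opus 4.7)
The plan is to proceed by contradiction in the spirit of the preceding lemma (which established $v_1 = v_n$): I will exhibit two proper factors $u_1, u_2$ of $v$, both in $L_X$ by minimality, such that in the gluing $w = u_1 * u_2$ provided by $1$-block-gluing, both values of $* \in \{0,1\}$ force an occurrence of $v$ inside $w$. This contradicts $v \notin L_X$, so $v_{i+1} = v_{n-i}$. Writing $a = v_{i+1}$, $b = v_{n-i}$ and $v = 1^i\, a\, Y\, b\, 1^i$, the binary alphabet splits the argument into two symmetric subcases, $a = 0, b = 1$ and $a = 1, b = 0$.

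For the case $a = 0$, $b = 1$ I propose $u_1 = v_1\cdots v_{n-1}$ and $u_2 = v_{i+2}\cdots v_n$. The choice $* = 1 = v_n$ makes $v$ the length-$n$ prefix of $w$. For $* = 0$ I locate $v$ inside $w$ starting at position $n-i$: that window reads $v_{n-i}\cdots v_{n-1}\cdot *\cdot v_{i+2}\cdots v_n$, which equals $1^i\cdot 0\cdot v_{i+2}\cdots v_n = v$, since $b = 1$ together with the suffix $1^i$ of $v$ makes the initial $v_{n-i}\cdots v_{n-1}$ into $1^i$, and $a = 0 = *$ supplies the middle character.

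The case $a = 1$, $b = 0$ is symmetric: take $u_1 = v_1\cdots v_{n-i-1}$ and $u_2 = v_2\cdots v_n$. Now $* = 0 = b$ makes $v$ the length-$n$ prefix of $w$ (using $a = 1$ and the prefix $1^i$ of $v$ to see that $v_2\cdots v_{i+1} = 1^i = v_{n-i+1}\cdots v_n$), while $* = 1 = v_1$ turns $*\cdot u_2$ into a copy of $v$ starting at the position of $*$. The only delicate step is the alignment check for the non-prefix occurrence in each subcase---verifying that the $1^i$ prefix/suffix of $v$ lines up correctly with the characters around $*$---and this is precisely what forces the asymmetric choice of lengths $n-1$ and $n-i-1$ for $u_1$ and $u_2$.
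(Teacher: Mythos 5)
Your proposal is correct and takes essentially the same route as the paper: your case $a=1$, $b=0$ uses exactly the paper's gluing word $v_1\cdots v_{n-i-1}\,{*}\,v_2\cdots v_n$, where both values of $*$ force an occurrence of $v$ as a prefix or suffix. The only difference is that the paper dispatches your other case ($a=0$, $b=1$) with ``up to symmetry'' (reversal), whereas you write out the mirror-image gluing and its alignment checks explicitly, which is fine but not a new idea.
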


\begin{proof}
Suppose that $v_{i+1}\neq v_{n-i}$. Up to symmetry we may assume
that $v_{i+1}=1$, $v_{n-i}=0$. Consider a word $w=v_1 \cdots
v_{n-i-1} * v_2 \cdots v_{n}=v_1 \cdots v_{n-i-1} * 1^i \cdots
v_n$, which must be in $L_X$ since $X$ is $1$-block-gluing. If we
have $*=1$, then we have an occurrence of $v$ as a suffix, and if
we have $*=0$, then we have an occurrence of $v$ as a prefix.
\end{proof}

\begin{lemma}
Under the conditions of Proposition \ref{prop:prohibited_factors},
if $v$ has a prefix $10^j$ and a suffix $0^j1$  for $j\geq 1$,
then $v_{j+2}=v_{n-j-1}$, where $|v|=n$.
\end{lemma}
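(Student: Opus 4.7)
I would follow the template of the two preceding lemmas: assume $v_{j+2}\neq v_{n-j-1}$ and derive a contradiction from $1$-block gluing by exhibiting a distance-$1$ concatenation of two proper factors of $v$ in which, whichever letter is inserted in the middle, $v$ itself appears as a factor. Since proper factors of a minimal prohibited word lie in $L_X$, $1$-block gluing will supply at least one valid middle letter, and the contradiction is that both choices are in fact forbidden.

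First, by the reversal symmetry (the mirror word $v^R$ inherits the prefix $10^j$ and the suffix $0^j 1$ since these two words are reverses of one another, is a minimal prohibited word of the reverse subshift---which is again $1$-block gluing---and swaps $v_{j+2}$ with $v_{n-j-1}$), I may assume $v_{j+2}=1$ and $v_{n-j-1}=0$. The gluing I propose is
$$w \;=\; v_1\cdots v_{n-j-2}\,*\,v_2\cdots v_n,$$
so that $*$ sits at position $n-j-1$. For $*=1$ the length-$n$ window $w[n-j-1..\,2n-j-2]=1\cdot v_2\cdots v_n$ is literally $v$. For $*=0$ I claim $w[1..n]=v$: the initial $n-j-2$ letters match by construction, position $n-j-1$ matches the hypothesis $v_{n-j-1}=0$, and the remaining $j+1$ positions amount to the equalities $v_{\ell+2}=v_{n-j+\ell}$ for $\ell=0,\ldots,j$, the first $j$ of which read $0=0$ (prefix $10^j$ against suffix $0^j$) while the last, $v_{j+2}=v_n$, reads $1=1$ by the WLOG assumption. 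Thus $v$ is forced into $w$ for both $*\in\{0,1\}$, contradicting $w\in L_X$.

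The only spot where some care is needed is this alignment of $v_2\cdots v_{j+2}$ with $v_{n-j}\cdots v_n=0^j 1$, since it is precisely where the boundary structure of $v$ and the WLOG condition on $v_{j+2}$ must combine. The degenerate range $n\leq j+2$ is harmless: when $n=j+2$ one has $v=10^j 1$ and $v_{j+2}=v_{n-j-1}=1$ by inspection, so it suffices to run the construction for $n\geq j+3$, in which range both $v_1\cdots v_{n-j-2}$ and $v_2\cdots v_n$ are nonempty proper subwords of $v$ and the appeal to $1$-block gluing is legitimate.
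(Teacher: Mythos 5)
Your proof is correct and takes essentially the same route as the paper's: the paper likewise assumes $v_{j+2}=1$, $v_{n-j-1}=0$ up to symmetry and glues the proper factors $v_1\cdots v_{n-j-2}$ and $v_2\cdots v_n$ at distance one, observing that $*=1$ forces $v$ as a suffix and $*=0$ forces $v$ as a prefix. Your additional care --- spelling out that the symmetry is reversal (not letter exchange), verifying the alignment $v_2\cdots v_{j+2}=v_{n-j}\cdots v_n$ letter by letter, and disposing of the degenerate case $n\leq j+2$ --- only makes explicit what the paper leaves implicit.
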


\begin{proof}
Suppose that $v_{j+2}\neq v_{n-j-1}$. Up to symmetry we may assume
that $v_{j+2}=1$, $v_{n-j-1}=0$. Consider a word $w=v_1 \cdots
v_{n-j-2} * v_2 \cdots v_{n}=v_1 \cdots v_{n-j-2} * 0^j \cdots
v_n$, which must be in $L_X$ since $X$ is $1$-block-gluing. If we
have $*=1$, then we have an occurrence of $v$ as a suffix, and if
we have $*=0$, then we have an occurrence of $v$ as a prefix.
\end{proof}

The proof of Proposition \ref{prop:prohibited_factors} follows
from the previous three lemmas.

\bigskip

{\emph{Proof of Theorem \ref{thm:second}.}} We will make use of
the following  claims (the proofs are straightforward based on the ideas we used above):

\bigskip

Claim 1. $h(SFT(111))>h(SFT(101))$.

\bigskip




Claim 2. For each 1-block-gluing subshift $Y\neq SFT(11), SFT (101)$ and symmetric ones, with $h(Y)<h(SFT(101))$, one has $\{0,1\}^3\subseteq
L_Y$.

\bigskip

Now, to prove the theorem, suppose that $Z\neq SFT(101), SFT(010)$
is a 1-block-gluing subshift with $h(SFT(11)) < h(Z) <
\frac{\log(31)}{6}$. Clearly, $Z$ must contain all the factors of
length 2 (otherwise either it is not 1-block-gluing or it is
SFT(11) due to Proposition \ref{prop:prohibited_factors}).

It must also contain all the factors of length 3 due to Claim 2.
Now if there are some prohibited factors of length 4, they are in
the set $0^4$, $1^4$, $1001$, $0110$ due to Proposition
\ref{prop:prohibited_factors}. By Proposition \ref{prop:maximal}, only
one of them can be prohibited, and the corresponding SFT is maximal.
The entropies of SFT's can be calculated in a standard way, and $h(SFT(1111)),h( SFT(1001))>\frac{\log{31}}{6}$
, so $Z$
must contain all the factors of length 4.

Now consider factors of length 5. By Proposition
\ref{prop:prohibited_factors}, one has that they are from the set
$1^5, 11011, 10101, 10001, 0^5, 00100, 01010, 01110$. Due to
Proposition \ref{prop:lowerbound} for $k=5$, we must prohibit at
least 2 factors of length 5 - and we will now prove that we cannot
prohibit more than one. First note that we cannot prohibit two
factors $u$ and $v$ such that $u$ begins and ends with 0 and $v$
begins and ends with 1. Indeed, consider a word $u^-*^-v$ (i.e.,
$u_1 u_2 u_3 u_4 * v_2 v_3 v_4 v_5$): it has to be in $L_Z$, so at
least one of the words $u$ and $v$ must also be in $L_Z$. So,
without loss of generality, the prohibited words must be from
$1^5, 11011, 10101, 10001$. We cannot prohibit $1^5$ or $10001$,
since they are maximal (by Proposition
\ref{prop:prohibited_factors}); and we cannot prohibit at the same
time $10101$ and $11011$ (consider $101*1011$). So, we can
prohibit at most one factor of length 5, which gives the proof. \qed

\subsection{Minimal entropy for $c=2$ and general observations}

We conjecture that the minimal entropy of binary $c$-block gluing subshifts is given by
$SFT(11, 101,\dots 10^{c-1}1)$ (see Conjecture \ref{conj:main}), but we proved it only for $c=1$ and $c=2$. One of the difficulties in the proofs  of minimality for $c>1$ comes from the increasing number of cases to study.  The other one comes from the fact the subshift with minimal entropy is not unique.

\begin{proposition} The subshifts $X=SFT(11,101,1001)$ and $Y=SFT(000,010,101)$ are 3-block gluing. Moreover,  $h(X)=h(Y)$.\end{proposition}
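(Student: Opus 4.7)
My plan is to address the three claims in sequence, using Lemma~\ref{lem:c} for the gluing properties and the Rauzy graph formalism for the entropy.

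For $X = SFT(11,101,1001)$, the language consists of words in which successive $1$'s are separated by at least three $0$'s. By Lemma~\ref{lem:c} it suffices to find, for every $u,v \in L_X$, a word $w$ of length $3$ with $uwv \in L_X$, and I would propose the universal choice $w = 000$. If $u$ ends with $10^a$ (or is all-zero, which is easier) and $v$ begins with $0^b 1$, the only potentially new factors are supported on the window $1\,0^{a+b+3}\,1$ at the junction, and since $a+b+3 \geq 3$ this contains none of $11, 101, 1001$; hence $u \cdot 000 \cdot v \in L_X$.

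For $Y = SFT(000,010,101)$, since $Y$ has order $3$, Proposition~\ref{prop:SFTcbg} reduces $3$-block gluing to verifying that in the Rauzy graph $G_3(Y)$ there is a path of length $3+3-1=5$ between every ordered pair of vertices. The vertex set is $L_Y(2) = \{00,01,10,11\}$, and the edges, read off from the length-$3$ words not in $\{000,010,101\}$, are $00 \to 01$, $01 \to 11$, $10 \to 00$, $11 \to 10$, and $11 \to 11$. The proof then reduces to a finite enumeration of path lengths between all $16$ ordered pairs of vertices. This is the step I expect to be the main technical obstacle, because the graph is quite thin --- it is the $4$-cycle $00 \to 01 \to 11 \to 10 \to 00$ together with a single self-loop at $11$ --- so for certain pairs the self-loop at $11$ and the cycle must be combined in a careful way to realize paths of length exactly $5$, and the ``one-way'' edges at the chokepoints $10$ and $01$ leave little flexibility.

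For the entropy equality, I would compute the adjacency matrices of $G_4(X)$ and $G_3(Y)$ and show they share a characteristic polynomial. The graph $G_4(X)$ has vertex set $L_X(3) = \{000,001,010,100\}$ with transitions $000 \to 000$, $000 \to 001$, $001 \to 010$, $010 \to 100$, $100 \to 000$, i.e., a $4$-cycle with a self-loop at $000$, while $G_3(Y)$ is the same shape with the self-loop at $11$. These two digraphs are isomorphic under the relabelling $000 \leftrightarrow 11$, $001 \leftrightarrow 10$, $010 \leftrightarrow 00$, $100 \leftrightarrow 01$, so their adjacency matrices are conjugate by a permutation, and in particular share the characteristic polynomial $x^4 - x^3 - 1$. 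Since the topological entropy of a transitive SFT equals the logarithm of the Perron eigenvalue of this matrix (see~\cite{LM95}), we conclude $h(X) = h(Y) = \log \lambda$, where $\lambda$ is the unique positive real root of $x^4 - x^3 - 1$.
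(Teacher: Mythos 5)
Your treatment of $X$ and of the entropy equality is correct, but the step you yourself flagged as the main technical obstacle --- the finite enumeration of length-$5$ paths in $G_3(Y)$ --- does not close, and in fact it refutes the claim. In the graph you computed (correctly: the directed $4$-cycle $00\to 01\to 11\to 10\to 00$ with a single self-loop at $11$), the vertices $10$, $00$ and $01$ each have out-degree one, so every walk starting at $10$ is forced to begin $10\to 00\to 01\to 11$; the two remaining steps of a length-$5$ walk then end at $11$, $10$ or $00$ (via $11\to 11\to 11$, $11\to 11\to 10$, $11\to 10\to 00$), never at $01$. Hence there is no path of length exactly $5$ from $10$ to $01$, and by Proposition~\ref{prop:SFTcbg} --- or directly: $10\,w\,01\in L_Y$ with $|w|=3$ forces $w_1=0$ (else $101$), $w_2=1$ (else $000$), $w_3=1$ (else $010$), and then $w_3\,0\,1=101$ is forbidden --- the subshift $Y$ is \emph{not} $3$-block gluing. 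A parity count makes the obstruction structural rather than an oversight in your enumeration: any walk from $10$ to $01$ uses $2+4k$ cycle edges plus $s$ self-loops, and $s\geq 1$ requires visiting $11$, which the two-edge route $10\to 00\to 01$ never does, so the realizable lengths are exactly $\{2+4k\}\cup\{6+4k+s\}$ and $5$ is excluded. The same analysis at length $n+c-1=6$ succeeds for all $16$ ordered pairs, so $Y$ is $4$-block gluing but not $3$-block gluing: no completion of your deferred enumeration can rescue the first claim of the statement as written.

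For comparison, the paper's own proof is a one-line appeal to Proposition~\ref{prop:SFTcbg} (``straightforwardly checked'') plus the assertion that the Rauzy graphs $G_3(X)$ and $G_2(Y)$ are isomorphic; those indices are misprints (the graphs in question have $3$ and $2$ vertices respectively and cannot be isomorphic), and the correct pair is exactly the one you used, $G_4(X)\cong G_3(Y)$, both being the $4$-cycle with one self-loop, with characteristic polynomial $x^4-x^3-1$. So your entropy argument is a corrected version of the paper's conjugacy argument, and your verification that $X$ is $3$-block gluing via the universal gluing word $000$ is sound (for $X$ the required path length is $4+3-1=6\equiv 2\pmod 4$, which is why the analogous enumeration does succeed there). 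The moral is precisely the point where both the paper's sketch and the statement break down: conjugacy preserves entropy but not the gluing constant, so the isomorphism $G_4(X)\cong G_3(Y)$ gives $h(X)=h(Y)$ but transports $3$-block gluing of $X$ only into $4$-block gluing of $Y$.
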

\begin{proof}
 The property of $3$-block gluing can be straightforwardly checked by Proposition \ref{prop:SFTcbg}. The equiality of the entropies follows from the fact that they are conjugate: More precisely, their  Rauzy graphs $G_3(X)$ and $G_2(Y)$ are isomorphic.
\end{proof}

\begin{proposition} The subshifts $X=SFT(11,101)$ and $Y=SFT(000,101)$ are 2-block gluing. Moreover,  $h(X)=h(Y)$.\end{proposition}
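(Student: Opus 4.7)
I would attack the two parts separately, mirroring the previous proposition.

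The $2$-block gluing property of both $X$ and $Y$ can be verified from Proposition~\ref{prop:SFTcbg}: both are SFTs of order $3$, so one must check that in the Rauzy graph $G_3(\cdot)$ there is a path of length $n+c-1=4$ between every ordered pair of vertices. For $X$ the vertex set is $L_X(2)=\{00,01,10\}$ (since $11\notin L_X$) and the edges are read off from $L_X(3)=\{000,001,010,100\}$; for $Y$ the vertex set is all of $\{00,01,10,11\}$ with edges read off from $L_Y(3)=\{001,010,011,100,110,111\}$. In each case I would write down the adjacency matrix and compute its fourth power entrywise.

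For $h(X)=h(Y)$ I would exhibit an explicit topological conjugacy $\phi\colon X\to Y$ given by a two-letter sliding block code:
\begin{equation*}
 \phi(x)_i=(1-x_i)(1-x_{i+1}),\qquad \psi(y)_i=(1-y_{i-1})(1-y_i),
\end{equation*}
so $\phi(x)_i=1$ iff $x_i=x_{i+1}=0$, and symmetrically for $\psi$. Both commute with $\sigma$ by inspection. To show $\phi(X)\subseteq Y$, I use that the set $S$ of $1$-positions of any $x\in X$ is $3$-separated (since $11,101\notin L_X$), and a short case analysis on $|S\cap\{i,i+1,i+2,i+3\}|\in\{0,1,2\}$ rules out both $\phi(x)_{[i,i+2]}=000$ and $\phi(x)_{[i,i+2]}=101$; the inclusion $\psi(Y)\subseteq X$ is symmetric, using $000,101\notin L_Y$. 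For the identity $\psi\circ\phi=\mathrm{id}_X$, the key observation is that $\phi(x)_{i-1}=\phi(x)_i=0$ forces each of the pairs $\{x_{i-1},x_i\}$ and $\{x_i,x_{i+1}\}$ to meet $S$, whereupon $101\notin L_X$ forces $x_i=1$. For $\phi\circ\psi=\mathrm{id}_Y$, one splits on $y_i\in\{0,1\}$, using $000,101\notin L_Y$ to eliminate the degenerate subcase $y_{i-1}=y_{i+1}=1,\ y_i=0$. Thus $\phi$ is a conjugacy and $h(X)=h(Y)$.

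The part that requires the most care is the construction of $\phi$. In contrast to the previous proposition, where the Rauzy graphs $G_3(X)$ and $G_2(Y)$ were directly isomorphic and the conjugacy could be read off, here $G_3(X)$ and $G_3(Y)$ have different numbers of vertices ($3$ versus $4$), so the conjugacy is not immediate. The map $\phi$ is discovered by noticing that the runlength decomposition of $X$ as a concatenation of blocks $0^\ell 1$ with $\ell\geq 2$ matches, length-for-length, the runlength decomposition of $Y$ as a concatenation of blocks $1^{\ell-1}00$, and that the positionwise realisation of this length-preserving bijection is exactly the two-letter code $\phi$ above. The verification that $\phi(X)\subseteq Y$ is where the forbidden pattern $101$ in $X$ is essentially used.
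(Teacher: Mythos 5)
Your conjugacy argument is correct, and it is in essence an explicit form of what the paper does: the paper proves $h(X)=h(Y)$ by asserting that the Rauzy graphs $G_4(X)$ and $G_3(Y)$ are isomorphic, and your sliding block code $\phi$ is exactly the positionwise realisation of that isomorphism (on vertices, $000\mapsto 11$, $001\mapsto 10$, $010\mapsto 00$, $100\mapsto 01$, i.e.\ a $3$-block of $X$ is sent to the pair of $\phi$-images of its two length-$2$ windows, and this bijection carries the six edges of $G_4(X)$ to the six edges of $G_3(Y)$). Your concern that $G_3(X)$ and $G_3(Y)$ have different vertex counts is resolved in the paper by comparing Rauzy graphs of \emph{different} orders; your runlength correspondence $0^{\ell}1\leftrightarrow 1^{\ell-1}00$ arrives at the same map. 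All four verifications you sketch ($\phi(X)\subseteq Y$, $\psi(Y)\subseteq X$, $\psi\circ\phi=\mathrm{id}_X$ using $101\notin L_X$, $\phi\circ\psi=\mathrm{id}_Y$ using $101\notin L_Y$) go through as stated, so this half is sound and more detailed than the paper's one line.

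The gluing half, however, contains a step that fails --- and the failure lies in the statement itself, not merely in your write-up. You defer the verification to computing the fourth powers of the adjacency matrices entrywise; for $X$ the check succeeds (indeed $u\,00\,v$ always works, since inserting $00$ keeps any two $1$'s at distance at least $3$, so neither $11$ nor $101$ can span the junction), but for $Y$ it does not: in the adjacency matrix of $G_3(Y)$ that you describe, the entry of $A^4$ from vertex $10$ to vertex $01$ is $0$. Concretely, take $u=10$ and $v=01$, both in $L_Y$, and try to realise $10\,w_1w_2\,01\in L_Y$: the forbidden word $101$ forces $w_1=0$, then $000$ forces $w_2=1$, and then $w_2v_1v_2=101$ is forbidden, so no $w$ of length $2$ exists. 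By Lemma~\ref{lem:c}, or directly from the definition, $Y=SFT(000,101)$ is \emph{not} $2$-block gluing (it is $3$-block gluing: $w=010$ glues this pair at distance $3$). Had you actually carried out the computation you propose, it would have refuted the proposition for $Y$ rather than proved it; the paper's own proof is equally terse (``straightforward to check'') and overlooks the same pair. Your conjugacy makes the phenomenon transparent: $\phi$ has anticipation $1$ and $\psi$ has memory $1$, and block-gluing constants are not conjugacy invariants --- they shift under sliding block codes, which is exactly why $X$ is $2$-block gluing while its conjugate $Y$ is only $3$-block gluing. So your proposal as written cannot be completed: the Rauzy-graph criterion of Proposition~\ref{prop:SFTcbg} is the right tool, but applying it honestly yields ``$X$ is $2$-block gluing, $Y$ is $3$- but not $2$-block gluing, and $h(X)=h(Y)$,'' which is how the statement ought to read.
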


\begin{proof} As in the previous proposition, it is straightforward to check that they are  $2$-block gluing and their Rauzy
graphs $G_4(X)$ and $G_3(Y)$ are isomorphic. \end{proof}

For $c=2$, we proved that the minimal entropy is an isolated point in $R_2$, as it is in the case of $c=1$:
\begin{theorem}
For $c=2$, the minimal positive entropy is given by $SFT(11,101)$, and it is an isolated point in $R_2$.\end{theorem}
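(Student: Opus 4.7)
The plan is to parallel the proof of Theorem~\ref{thm:min_ent}. I first verify, via Proposition~\ref{prop:SFTcbg} applied to its Rauzy graph of order~$3$, that $X^{\star} = SFT(11,101)$ is $2$-block-gluing, and I compute its entropy as $h^{\star} = \log \alpha$ where $\alpha > 1$ is the unique real root of $t^{3} - t^{2} - 1 = 0$ (numerically $\alpha \approx 1.466$, $h^{\star} \approx 0.382$). By the $0 \leftrightarrow 1$ symmetry, $SFT(00,010)$ is also $2$-block-gluing with the same entropy.

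Next, I use the maximality proposition at the end of Section~5 with $c=2$: any $2$-block-gluing subshift $Y$ with $\{0,1\} \subseteq L_{Y}$ and $11, 101 \notin L_{Y}$ must equal $X^{\star}$, and symmetrically for $SFT(00,010)$. Hence, for any $2$-block-gluing $Y$ with $h(Y) > 0$ and $Y \notin \{X^{\star}, SFT(00,010)\}$, the language $L_{Y}$ contains at least one pattern from $\{11,101\}$ and at least one from $\{00,010\}$.

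The strategy is then to produce, in each resulting sub-case, a pair $(k,N)$ with $|L_{Y}(k)| \geq N$ and $\log N / (k+2) > h^{\star}$, so that Proposition~\ref{prop:lowerbound} yields $h(Y) > h^{\star}$. Admissible thresholds include $(3,7)$, $(4,10)$ and $(5,15)$, since $\alpha^{5} \approx 6.77$, $\alpha^{6} \approx 9.91$ and $\alpha^{7} \approx 14.52$. In the principal sub-case where $00, 11 \in L_{Y}$, I would prove an analogue of Lemma~\ref{lem:MinLY} by systematic pasting of the words $0$, $1$, $00$, $11$ at distance exactly $2$ (via Lemma~\ref{lem:c}), showing that enough length-$3$ or length-$4$ words are forced into $L_{Y}$ to cross the threshold; the case analysis is heavier than in the $c=1$ proof because each gluing produces four candidate infill words of length~$2$ rather than two candidates of length~$1$.

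The main obstacle is the asymmetric sub-case where $11 \in L_{Y}$ but $00 \notin L_{Y}$ (and its mirror), which forces $Y \subseteq SFT(00)$. Since $|L_{SFT(00)}(k)|$ grows only like $\phi^{k}$ with $\phi < \alpha^{2}$, the direct counting bounds above are never attained for $k \leq 5$ in this regime. I would treat this case by a finer analysis: using that $11 \in L_{Y}$ together with the $2$-block-gluing condition lets one paste $11$ with $11$ and with $1$ at gap~$2$ inside $SFT(00)$, one constrains the possible short minimal forbidden words of $Y$ and concludes that either $Y$ coincides with $SFT(00)$, whose entropy $\log \phi \approx 0.481$ comfortably exceeds $h^{\star}$, or $Y$ still admits enough factors at a carefully chosen length $k$ to exclude $h(Y) \in (h^{\star}, h^{\star} + \delta)$. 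An analogue of Proposition~\ref{prop:prohibited_factors} specialised to $c=2$, describing the shape of minimal forbidden words, should streamline this bookkeeping.
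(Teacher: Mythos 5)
There is a genuine gap, and it is precisely the difficulty the paper flags for $c>1$: \emph{the entropy minimizer is not unique}. Your strategy partitions the $2$-block-gluing subshifts into $\{SFT(11,101),\,SFT(00,010)\}$ versus everything else, and then aims to push everything else strictly above $h^\star=\log\alpha$ by a counting threshold via Proposition~\ref{prop:lowerbound}. But the paper proves that $SFT(000,101)$ is $2$-block-gluing with $h(SFT(000,101))=h(SFT(11,101))$ (its Rauzy graph $G_3$ is isomorphic to $G_4(SFT(11,101))$, so the two subshifts are conjugate), and the same holds for the mirror $SFT(111,010)$. Now $SFT(000,101)$ contains both $00$ and $11$, so it lies squarely in your ``principal sub-case,'' where you propose an analogue of Lemma~\ref{lem:MinLY} forcing $|L_Y(3)|\geq 7$ (or $|L_Y(4)|\geq 10$, $|L_Y(5)|\geq 15$). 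No such lemma can hold: $|L_{SFT(000,101)}(3)|=6$, and more generally, since its entropy is exactly $h^\star$, Proposition~\ref{prop:lowerbound} forces $|L(k)|\leq \alpha^{k+2}$ for \emph{every} $k$, so no admissible pair $(k,N)$ with $\log N/(k+2)>h^\star$ can ever be met. Your numerical thresholds are correct, but the dichotomy must be refined before any counting: one must first identify \emph{all} subshifts attaining the minimum --- at least $SFT(11,101)$, $SFT(00,010)$, $SFT(000,101)$, $SFT(111,010)$ --- and prove maximality for each. The maximality proposition you invoke covers only $SFT(11,101)$ (and its mirror by renaming letters); for $SFT(000,101)$ a separate argument is needed, e.g.\ via the sufficient condition of Remark~\ref{remark:maximal}.

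For comparison, the paper does not carry out the full case study either: it explicitly names this non-uniqueness as one of the two main obstructions for $c>1$, states that the proof is ``similar to'' Theorems~\ref{thm:min_ent} and~\ref{thm:second}, and substitutes a semi-algorithm --- enumerate SFTs order by order, prune by the complexity bound $D_n$ derived from Proposition~\ref{prop:lowerbound}, test $2$-block-gluing via Proposition~\ref{prop:SFTcbg}, and discharge branches satisfying the maximality criterion of Remark~\ref{remark:maximal} --- whose termination certifies minimality, isolation, and a lower bound on the gap. A corrected version of your argument would have to follow the same pattern: the exceptional subshifts with entropy exactly $h^\star$ must be discovered and discharged by maximality arguments, and only the remaining branches attacked with threshold counting. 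Your asymmetric sub-case ($11\in L_Y$, $00\notin L_Y$) is also only a sketch, but it is at least not provably wrong; the missed minimizers $SFT(000,101)$ and $SFT(111,010)$ are the decisive error.
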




The proof of this theorem is similar to the proof of Theorems \ref{thm:min_ent} and \ref{thm:second}. We omit the details of the thorough case study, but instead we provide a general semi-algorithm to check that $X_c=SFT(11,101, \dots, 10^{c-1}1)$
gives the minimum of the entropy for $c$-block gluing codes and
is separated by a gap from the next one. It also gives some gap in $R_c$, and could easily be adapted for finding the next entropy and proving some related results. However, this is not a proper algorithm since theoretically it might work forever in the case if the minimum of the entropy is
a point of accumulation, otherwise it stops and gives the result (either a proof of minimality, or a subshift with smaller entropy, if the subshift in question does not give a minimal entropy).

\bigskip

First we remark that for each $n$, Proposition \ref{prop:lowerbound} and the entropy $h(X_c)$ give an upper bound $D_n$ on $|L_Y(n)|$ for a subshift $Y$ with $h(Y)\leq h(X_c)$. Indeed, by Proposition \ref{prop:lowerbound}, each $c$-block-gluing subshift satisfies
$$h(Y)\geq  \frac{\log (|L_Y(n)|)}{c+n}$$ for each $n$. So, if $h(Y)\leq h(X_c)$, then  $$\frac{\log (|L_Y(n)|)}{c+n}\leq h(X_c).$$
So, if we define $D_n$ by
$$D_n= \lfloor exp((c+n)h(X_c))\rfloor,$$
 then we must have $L_n(Y)\leq D_n$.

\bigskip

The semi-algorighm works as follows. Start with $n=c+1$ and repeat the following steps:

\begin{enumerate} \item For order $n$, consider the set $S_0(n)$ of all possible SFT's of this order.

\begin{enumerate}\item Choose among them those the set $S_1(n)$ of subshifts satisfying  $L_n(Y)\leq D_n$.

\item For $Y\in S_1(n)$, if there exists a $c$-block gluing subshift $Z$ coinciding with $Y$ on
words of length $n$, then $Y$ is $c$-block gluing by Proposition \ref{prop:Gn}. Clearly, we have $h(Z)\leq h(Y)$. By Proposition \ref{prop:SFTcbg}, we can choose the set
$S_2(n)\subseteq S_1(n)$ of SFT's which are $c$-block gluing.

\item We let $S_2'(n)$ denote the subset of subshifts from $S_2(n)$ which satisfy the sufficient condition on maximality from Remark \ref{remark:maximal}.
Let $C_n=\min\{h(Y)|Y\in S_2'(n)\}$. Clearly, if $X_c$ has minimal entropy, then $C_n\geq h(X_c)$, otherwise we found a $c$-block gluing subshift with smaller entropy. So, if $C_n\geq h(X_c)$, then define $S_3$ by $S_3(n)=S_2(n)\setminus S_2'(n)$.\end{enumerate}

\item Now proceed to the next order $n+1$. Define $S_0(n+1)$ as the set of SFT's of order $n+1$ such that their languages of length $n$ coincide with languages of some subshift from $S_3(n)$: $S_0(n+1)=\{Y| \exists Z\in S_3(n): L_n(Y)=L_n(Z)\}$.
 Define its subsets  $S_1(n+1)$, $S_2(n+1)$, $S_3(n+1)$ and $C_{n+1}$ as on the steps 1a,1b,1c. \end{enumerate}

 Possible results of this process are the following:

 \begin{itemize}
     \item  The process stops if at some $n$ the set $S_3(n)$ is empty. In this case we proved that $X_c$ has a minimal entropy, $h(X_c)$ is an isolated point in $R_c$, and we have a lower bound for the gap in $R_c$: the next entropy is at least $\min\{C_{c+1}, \dots, C_n, \frac{\log( D_{c+1}+1)}{c+2}, \dots, \frac{\log(D_n+1)}{n+c}\}$. If the minimum is attained on $C_i$, the corresponding SFT gives the second entropy.

 \item If at some $n$ we receive an element $Y$ from $S_2(n)$ with $h(Y)<h(X_c)$, then $X_c$ was not a subshift with minimal positive entropy among $c$-block gluing subshifts.

 \item The process might as well last forever. That would mean that $X_c$ indeed has minimal entropy, but it is an accumulation point, so we will never know it.
 \end{itemize}

 We suggest that the outcome is the first case for any $c$.
 \bigskip

 \begin{remark}We remark that, given first several values of the entropies of $c$-block gluing subshifts, the semi-algorithm can be easily modified to find the next one (if exists). Another observation is that the step 2c is not actually necessary, but it allows to cut quite a few branches of case study.  \end{remark}
  \bigskip






\section{Conclusions and future work}

The general goal of this research is to characterize the spectrum for the entropies of $c$-block gluing subshifts.
We showed that $R=\cup_{c\in\mathbb{N}} R_c$ is dense, while $R_1$ and $R_2$ are not; a natural conjecture is that this is true for any $c$. However, we proved that for each $c$ the spectrum $R_c$ has infinitely many accumulation points. We also suggest the folowing two conjectures about the minimal entropy in $R_c$, which we proved for $c=1$ and $c=2$:

\begin{conjecture}\label{conj:main}
The minimal positive entropy of binary $c$-block gluing subshifts is given by
$$X_c=SFT(11, 101,\dots 10^{c-1}1).$$
\end{conjecture}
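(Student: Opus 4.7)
The plan is to generalize the strategy used for $c=1$ (Theorem \ref{thm:min_ent}) and $c=2$, combining the lower bound of Proposition \ref{prop:lowerbound}, the maximality statement at the end of Section 5, and the case analysis organised through the semi-algorithm of Section \ref{section:particular}.

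Let $Y$ be a $c$-block gluing binary subshift with positive entropy. Both $0$ and $1$ must lie in $L_Y$, else $h(Y)=0$. I split on whether the collection $F_c = \{11, 101, 1001, \ldots, 10^{c-1}1\}$ meets $L_Y$. If $F_c \cap L_Y = \emptyset$, the last proposition of Section 5 forces $Y = X_c$ and the claim holds with equality. Otherwise, some $10^i 1$ with $0 \leq i \leq c-1$ belongs to $L_Y$, and the task becomes to prove $h(Y) > h(X_c)$.

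For this non-trivial case I would first extend Proposition \ref{prop:prohibited_factors} by rerunning its three-lemma argument with the single gluing symbol $*$ replaced by a length-$c$ gluing block $*^c$: any minimal forbidden word of a $c$-block gluing binary subshift then inherits matching prefix/suffix structure, confining the candidate forbidden patterns at each length to a structured (and enumerable) set. Combined with Proposition \ref{prop:SFTcbg}, this turns the problem at every order $n$ into a finite case study over structured SFTs. The semi-algorithm of Section \ref{section:particular} implements this enumeration, pruned at each order by the bound $D_n = \lfloor \exp((c+n) h(X_c)) \rfloor$ from Proposition \ref{prop:lowerbound}. The quantitative key is then to show that the presence of $10^i 1 \in L_Y$ with $i<c$, under $c$-block gluing, forces substantially more words of length $k$ than $|L_{X_c}(k)|$ for some moderate $k$ depending on $c$, so that $h(Y) \geq \log|L_Y(k)|/(c+k) > h(X_c)$; a Perron--Frobenius comparison between the characteristic polynomial $x^{c+1} - x^c - 1$ governing $h(X_c)$ and the polynomial of the enriched transfer graph should yield the required spectral gap.

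The principal obstacle is uniformity in $c$. The case study has been carried out by hand only for $c=1,2$; as $c$ grows the number of structured candidate SFTs at each order grows, and one must bound the order $n(c)$ at which the semi-algorithm closes. Overcoming this requires either a closed-form identification of the "second-best" $c$-block gluing SFT together with a perturbation analysis of $x^{c+1}-x^c-1$ (noting that $\lambda_c \to 1$ as $c \to \infty$, so the relevant gap shrinks and must be tracked carefully), or an induction on $c$ exploiting the inclusion $R_c \subset R_{c+1}$ together with the fact that a stricter gluing constraint restricts the structure of low-entropy minimizers. Either route reduces the conjecture to a finitely checkable statement for each $c$, verifiable via the semi-algorithm; a clean uniform proof is likely to come from the spectral route, which I expect to be the hardest step.
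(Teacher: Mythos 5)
You set out to prove a statement that the paper itself presents only as Conjecture \ref{conj:main}: the authors establish it for $c=1$ and $c=2$ (Theorem \ref{thm:min_ent} and the unlabeled theorem in Section \ref{section:particular}) and for general $c$ offer only a semi-algorithm, explicitly warning that it may run forever if $h(X_c)$ is an accumulation point of $R_c$. So there is no paper proof to match, and your proposal — whose architecture (lower bound of Proposition \ref{prop:lowerbound}, maximality of $X_c$ from the last proposition of Section 5, enumeration pruned by $D_n$, Proposition \ref{prop:SFTcbg} for decidability of $c$-block gluing on SFTs) faithfully mirrors the authors' own program, and whose identification of $x^{c+1}-x^c-1$ as the polynomial governing $h(X_c)$ is correct — must be judged as a proof on its own terms. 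Read that way, it has genuine gaps, not merely unpolished steps.

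The sharpest gap is your claim that Proposition \ref{prop:prohibited_factors} generalizes by ``replacing $*$ with $*^c$.'' The $c=1$ argument is a dichotomy on a single binary symbol: in $v_1\cdots v_{n-1}*v_2\cdots v_n$, the choice $*=v_n$ completes $v$ as a prefix and $*=v_1$ completes it straddling into the suffix, and when $v_1\neq v_n$ these two values exhaust $\{0,1\}$, forcing a contradiction. With a gluing block $w_1\cdots w_c$, the prefix occurrence constrains $w_1$ while the suffix occurrence constrains $w_c$; for $c\geq 2$ these are different letters, and a gluing word with $w_1\neq v_n$ and $w_c\neq v_1$ avoids both occurrences, so no structural restriction on minimal forbidden words follows. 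That the structure genuinely degrades for $c\geq 2$ is visible in the paper itself: the minimizer is no longer unique even up to renaming letters ($SFT(11,101)$ and $SFT(000,101)$ for $c=2$), which your symmetry-based confinement of candidates cannot reproduce. Second, the quantitative heart — that any $c$-block-gluing $Y$ containing some $10^i1$ with $i<c$ has $h(Y)>h(X_c)$, with a gap strong enough to close the enumeration at a bounded order $n(c)$ — is asserted, not proved; naming a ``Perron--Frobenius comparison'' states the desired inequality without producing it, and as you note $\lambda_c\to 1$, so the gap to be controlled shrinks with $c$. Third, even granting your framework, the semi-algorithm decides the question only if it halts ($S_3(n)=\emptyset$ for some $n$), which is precisely what is unknown for $c\geq 3$: your reduction therefore terminates at exactly the open problem the paper states, not at a proof of it.
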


We remark that for $c>1$ the subshift $X_c$ is not the unique $c$-block gluing subshift with this entropy.

\begin{conjecture}\label{conj:main1}
The entropy of
$X_c$ is an isolated point in $R_c$.
\end{conjecture}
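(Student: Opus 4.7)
The plan is to follow the framework of the semi-algorithm sketched in Section~\ref{section:particular} and combine it with structural constraints on minimal prohibited words, in the spirit of Proposition~\ref{prop:prohibited_factors}. First, by Proposition~\ref{prop:Gn}, if some $c$-block gluing subshift $Y$ accumulates on $h(X_c)$ from above, then the Rauzy-graph approximants $Y_{G_n(Y)}$ are $c$-block gluing SFTs whose entropies also accumulate on $h(X_c)$. Hence it suffices to exhibit an $\epsilon>0$ such that no $c$-block gluing SFT $Y\neq X_c$ (up to symmetry and up to the non-uniqueness illustrated after Conjecture~\ref{conj:main}) satisfies $h(Y)\in(h(X_c),h(X_c)+\epsilon)$.

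The main tool will be Proposition~\ref{prop:lowerbound}, which forces $|L_Y(n)|\leq D_n:=\lfloor \exp((c+n)(h(X_c)+\epsilon))\rfloor$ for any candidate $Y$. The program is then to enumerate, order by order starting from $n=c+1$, the $c$-block gluing SFTs whose language of order $n$ has size at most $D_n$, using Proposition~\ref{prop:SFTcbg} to certify $c$-block gluing from the Rauzy graph. At each order the set of admissible languages $L_Y(n)$ splits into those already realised by a maximal SFT (in the sense of Definition~1 and Remark~\ref{remark:maximal}) and those that could still be extended; for the maximal ones one computes the entropy by the standard Perron--Frobenius formula on the Rauzy graph and checks that only $X_c$ (and its symmetric and conjugate variants) attains a value $\leq h(X_c)$, while all others lie above some explicit threshold $C_n$.

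The key auxiliary step I would isolate is a generalisation of Proposition~\ref{prop:prohibited_factors} to arbitrary $c$: any minimal prohibited word of a $c$-block gluing subshift must satisfy strong symmetry constraints between its prefix and suffix of length $\leq c$, obtained by gluing $v_1\cdots v_{n-1}\ast^{c} v_2\cdots v_n$ and iterating on successive mismatches. This drastically shrinks the enumeration in step~(1a)--(1c) of the semi-algorithm and, together with the observation from Section~5 that factors $10^i1$ with $i<c$ must be forbidden in any subshift with entropy close to $h(X_c)$ (because otherwise the word $1\ast^c1$ forces $|L_Y(c+2)|$ to grow past $D_{c+2}$), should force that the only ``small'' languages of order $c+1$ are precisely those of $X_c$ and the finitely many conjugate subshifts identified in the propositions of Section~5.

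The hard part will be proving termination of this enumeration, i.e.\ showing that after finitely many orders $n$ every remaining candidate language extends uniquely to either $X_c$ or to an SFT with entropy strictly exceeding $h(X_c)+\delta$ for a uniform $\delta>0$. Without such a termination argument one only obtains the semi-decision procedure described in the paper, which could in principle run forever precisely at an accumulation point. I would try to secure termination by a monotonicity statement: if a candidate $Y$ at order $n$ has language containing all admissible factors of $X_c$ plus at least one extra factor $w$ forbidden in $X_c$, then $w$ enforces, via $c$-block gluing, an explicit set of additional factors of bounded length whose contribution to $|L_Y(N)|$ for large $N$ can be lower-bounded by a polynomial-in-$w$ multiplicative factor over the Fibonacci-type growth of $X_c$. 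This would give the uniform gap $\delta$ and collapse the semi-algorithm to a finite case analysis, yielding both the isolation of $h(X_c)$ in $R_c$ and an explicit lower bound for the next entropy.
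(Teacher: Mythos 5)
The statement you are trying to prove is stated in the paper as Conjecture~\ref{conj:main1}: the paper itself offers \emph{no proof} for general $c$. It establishes isolation only for $c=1$ (Theorems~\ref{thm:min_ent} and~\ref{thm:second}) and $c=2$, via bespoke case analyses, and for general $c$ it gives exactly the enumeration procedure you describe --- explicitly labelled a \emph{semi}-algorithm because it terminates precisely when $h(X_c)$ is isolated and may run forever otherwise. So your proposal is not a proof but a restatement of the paper's open program, and you correctly identify the missing piece yourself: termination. Everything before that point is sound (the reduction to SFTs via Proposition~\ref{prop:Gn} works, since $h(Y_{G_n(Y)})\leq \log|L_Y(n)|/n\to h(Y)$ and each $Y_{G_n(Y)}$ is $c$-block gluing; the bound $|L_Y(n)|\leq D_n$ from Proposition~\ref{prop:lowerbound} is the paper's step~1a), but none of it touches the conjecture's actual content.

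The genuine gap is the ``monotonicity statement'' you propose to secure termination, and as stated it is false. You claim that any candidate $Y$ containing one extra factor $w$ beyond $L(X_c)$ acquires a multiplicative growth surplus yielding a \emph{uniform} gap $\delta>0$. But the paper's own accumulation-point constructions refute any such uniform statement: for $c=1$, the subshifts $SFT(10101,(101)^k)$ are $1$-block gluing with entropies accumulating at $h(SFT(10101))$, and for $c\geq 2$ the family $SFT(0^i,1^j)$ accumulates at $h(SFT(0^i))$ as $j\to\infty$. In both families, consecutive members differ by admitting or forbidding a single long word, and the resulting entropy increments tend to zero as the word length grows --- so the surplus forced by an extra factor $w$ necessarily degrades with $|w|$ and cannot be bounded below independently of $w$. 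Any workable gap argument must therefore exploit the specific combinatorics near $h(X_c)$ (ruling out an accumulating family \emph{at that particular value}), which is exactly what the conjecture asserts and what remains open; note also that your two auxiliary inputs --- the generalisation of Proposition~\ref{prop:prohibited_factors} to arbitrary $c$, and the claim that all $10^i1$ with $i<c$ must be forbidden in any subshift with entropy near $h(X_c)$ --- are themselves asserted without proof, and the second is delicate because for $c>1$ the minimal entropy is attained by non-isomorphic languages (e.g.\ $SFT(11,101)$ and $SFT(000,101)$ for $c=2$), so ``small language of order $c+1$'' does not pin down $L(X_c)$ up to symmetry alone.
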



One of the open questions on the structure of $R_c$ is the following:

\begin{question} Given $c$, does $R_c$ have an interval of density?\end{question}


\bibliography{main_journal}

\end{document}